\documentclass[11pt, a4paper]{article}

\usepackage{authblk}
\usepackage[dvipsnames]{xcolor}
\usepackage[margin=1in,bottom=1in]{geometry}

\usepackage{amsmath}
\usepackage{amsthm}
\usepackage{amssymb}
\usepackage{amsfonts}
\usepackage{mathtools}
\allowdisplaybreaks[1]

\usepackage{enumerate}
\usepackage{enumitem}
\usepackage{needspace}
\usepackage{float}
\usepackage[title]{appendix}

\usepackage{adjustbox}
\usepackage[nodisplayskipstretch]{setspace}

\usepackage[most]{tcolorbox}

\usepackage{indentfirst}
\usepackage{comment}

\usepackage{booktabs}
\usepackage{array}
\usepackage{colortbl}
\usepackage{subfigure}
\usepackage{makecell}

\usepackage[english]{babel}
\usepackage[T1]{fontenc}
\AtBeginDocument{%
  \DeclareFontShape{T1}{cmr}{m}{scit}{<->ssub*cmr/m/sc}{}%
}

\usepackage[algoruled]{algorithm2e}
\SetAlgorithmName{Protocol}{Protocol}{List of Protocols}

\usepackage{tikz}

\usepackage{xurl} 
\usepackage[pagebackref,colorlinks,linkcolor=NavyBlue,anchorcolor=blue,citecolor=NavyBlue,hypertexnames=false]{hyperref}

\hypersetup{breaklinks=true}

\usepackage[capitalize,noabbrev]{cleveref}
\crefname{ineq}{inequality}{inequalities}
\creflabelformat{ineq}{#2{\upshape(#1)}#3}
\crefname{fact}{fact}{facts}
\crefname{equation}{equation}{equations}
\crefname{algorithm}{protocol}{protocols} 
\crefname{remark}{remark}{remarks}
\crefname{conjecture}{conjecture}{conjectures}
\crefname{problem}{problem}{problems}

\usepackage{thmtools}
\makeatletter
\@ifundefined{newcounteralias}{}{%
  \renewcommand\thmt@autorefsetup{%
    \@xa\def\csname\thmt@envname autorefname
    \@xa\endcsname\@xa{\thmt@thmname}%
  }%
}
\makeatother

\usepackage{thm-restate}
\declaretheorem[style=plain,numberwithin=section]{theorem}
\declaretheorem[style=plain,numberlike=theorem]{lemma,corollary}

\declaretheorem[style=plain,numberlike=theorem]{definition}
\declaretheorem[style=plain,numberlike=theorem]{proposition}
\declaretheorem[style=plain,numberlike=proposition]{fact}

\declaretheorem[style=definition,numberlike=theorem]{problem,conjecture}
\numberwithin{equation}{section}

\makeatletter
\def\@buildmath#1{%
  \expandafter\def\csname bb#1\endcsname{\ensuremath{\mathbb{#1}}}%
  \expandafter\def\csname bf#1\endcsname{\ensuremath{\mathbf{#1}}}%
  \expandafter\def\csname sf#1\endcsname{\ensuremath{\mathsf{#1}}}%
  \expandafter\def\csname cal#1\endcsname{\ensuremath{\mathcal{#1}}}%
  \expandafter\def\csname rm#1\endcsname{\ensuremath{\mathrm{#1}}}%
  \expandafter\def\csname tt#1\endcsname{\ensuremath{\mathtt{#1}}}%
}
\def\@buildmathletters#1{%
  \ifx#1\relax\else
    \@buildmath{#1}%
    \expandafter\@buildmathletters
  \fi
} 
\@buildmathletters ABCDEFGHIJKLMNOPQRSTUVWXYZabcdefghijklmnopqrstuvwxyz\relax
\makeatother

\newcommand{\parheading}[1]{%
  \par\addvspace{1em}%
  \noindent\emph{#1}\enspace\ignorespaces%
}

\newcommand{\BPP}{\textnormal{\textsf{BPP}}\xspace}

\newcommand{\QCMA}{\textnormal{\textsf{QCMA}}\xspace}
\newcommand{\QCMAone}{\textnormal{\textsf{QCMA}\textsubscript{1}}\xspace}
\newcommand{\QMA}{\textnormal{\textsf{QMA}}\xspace}

\newcommand{\IP}{\textnormal{\textsf{IP}}\xspace}
\newcommand{\QIP}{\textnormal{\textsf{QIP}}\xspace}
\newcommand{\QIPtwo}{\textnormal{\textsf{QIP(2)}}\xspace}
\newcommand{\QIPtwoone}{\textnormal{\textsf{QIP(2)}\textsubscript{1}}\xspace}
\newcommand{\QIPLtwo}{\textnormal{\textsf{QIPL(2)}}\xspace}

\newcommand{\qqQAM}{\textnormal{\textrm{qq}\text{-}\textsf{QAM}}\xspace}
\newcommand{\qqQAMone}{\textnormal{\textrm{qq}\text{-}\textsf{QAM}\textsubscript{1}}\xspace}

\newcommand{\QAM}{\textnormal{\textsf{QAM}}\xspace}
\newcommand{\QCAM}{\textnormal{\textsf{QCAM}}\xspace}

\newcommand{\QMAtwo}{\textnormal{\textsf{QMA}(2)}\xspace}

\newcommand{\MA}{\textnormal{\textsf{MA}}\xspace}

\newcommand{\protocol}[2]{{#1}\!\rightleftharpoons\!{#2}}

\newcommand{\PSPACE}{\textnormal{\textsf{PSPACE}}\xspace}

\newcommand{\QMAL}{\textnormal{\textsf{QMAL}}\xspace}

\newcommand{\FixedCI}{\textnormal{\textsc{FixedCI}}\xspace}

\DeclarePairedDelimiter\rbra{\lparen}{\rparen}

\DeclarePairedDelimiter\cbra{\{}{\}}
\DeclarePairedDelimiter\abs{\lvert}{\rvert}
\DeclarePairedDelimiter\norm{\lVert}{\rVert}

\let\ket\relax\DeclarePairedDelimiter\ket{\lvert}{\rangle}
\let\bra\relax\DeclarePairedDelimiter\bra{\langle}{\rvert}

\newcommand{\ketbra}[2]{\ensuremath{\ket{#1}\!\bra{#2}}}
\renewcommand{\bra}[1]{\langle #1 \rvert}
\renewcommand{\ket}[1]{\lvert #1 \rangle}
\newcommand{\braket}[3]{\langle #1 | #2 | #3 \rangle}
\newcommand{\innerprod}[2]{\langle #1 | #2 \rangle}

\newcommand{\Tr}{\mathrm{Tr}}

\newcommand{\F}{\mathrm{F}}

\newcommand{\spanset}{\mathrm{span}}

\newcommand{\yes}{\mathrm{yes}}
\newcommand{\no}{\mathrm{no}}

\newcommand{\EPR}{\mathrm{EPR}}
\newcommand{\init}{\mathrm{init}}
\newcommand{\midd}{\mathrm{mid}}
\newcommand{\final}{\mathrm{final}}
\newcommand{\inward}{\mathrm{inward}}
\newcommand{\qq}{\mathrm{qq}}
\newcommand{\In}{\mathrm{in}}
\newcommand{\Out}{\mathrm{out}}
\newcommand{\match}{\mathrm{match}}

\newcommand{\Toffoli}{\textnormal{\textsc{Toffoli}}\xspace}
\newcommand{\Had}{\textnormal{\textsc{H}}\xspace}
\newcommand{\CNOT}{\textnormal{\textsc{CNOT}}\xspace}

\renewcommand{\Pr}[1]{\mathrm{Pr}\!\left[#1\right]}

\newcommand{\binset}{\{0,1\}}

\newcommand{\poly} {\operatorname{poly}}

\DeclareMathOperator\diag{diag}

\begin{document}

\setlength{\abovedisplayskip}{6pt}
\setlength{\belowdisplayskip}{6pt}

\title{Achieving perfect completeness for one- and two-message quantum proof systems}

\author[1]{Yupan Liu\thanks{Email: \url{yupan.liu@epfl.ch}}}
\author[1]{Thomas Vidick\thanks{Email: \url{thomas.vidick@epfl.ch}}}
\affil[1]{%
  School of Computer and Communication Sciences,\protect\\
  \'Ecole Polytechnique F\'ed\'erale de Lausanne%
}

\date{}
\maketitle

\begin{abstract}
While quantum interactive proof systems using at least \emph{three} messages can achieve perfect completeness, as shown by \hyperlink{cite.KW00}{Kitaev and Watrous~(STOC 2000)}, whether perfect completeness is achievable for \emph{one}- and \emph{two}-message quantum proof systems has remained open. For the one-message case, whether $\sf QMA$ can achieve perfect completeness was posed as an open problem in \hyperlink{cite.Watrous00}{Watrous~(FOCS 2000)} and \hyperlink{cite.AN02}{Aharonov and Naveh~(2002)};  for the two-message case, the corresponding problems were (implicitly) posed in \hyperlink{cite.JUW09}{Jain, Upadhyay, and Watrous~(FOCS 2009)} and \hyperlink{cite.KLGN19}{Kobayashi, Le Gall, and Nishimura~(SICOMP, 2019)}. 

In this work, we establish that ${\sf QIP}(2)$, ${\rm qq}\text{-}{\sf QAM}$, $\sf QAM$, and $\sf QMA$ can achieve perfect completeness. Here ${\rm qq}\text{-}{\sf QAM}$ denotes the class of promise problems admitting two-message \emph{quantum-public-coin} quantum interactive proof systems in which the verifier's only message consists of half-EPR pairs. Our main technical contributions are as follows: 
\begin{enumerate}[label={\upshape(\arabic*)}]
    \item For $\sf QMA$ (and directly for $\sf QAM$), an \emph{exactly constructible} block-encoded matrix whose kernel certifies \emph{yes} instances, constructed from the acceptance operator induced by the verification circuit. 
    \item For ${\sf QIP}(2)$ (and implicitly ${\rm qq}\text{-}{\sf QAM}$), a new turn-halving transformation that preserves completeness and ensures that the resulting proof system retains at least \emph{two} messages, provided that the terminal state before the final measurement is \emph{efficiently preparable}. 
\end{enumerate}
\end{abstract}

\section{Introduction}

The investigation of quantum interactive proof systems was initiated in~\cite{Wat03}, fifteen years after the notion of \emph{interactive proof systems} was introduced in the classical setting~\cite{Babai85,GMR85}. In a quantum interactive proof system for a promise problem $(\calI_\yes,\calI_\no)$, a quantum polynomial-time \emph{verifier} interacts with a computationally unbounded but untrusted \emph{prover}. While these two parties share no entanglement initially, they may create entanglement during the interaction, which consists of at most polynomially many exchanged messages. Given an input $x\in\calI_\yes \cup \calI_\no$, the prover claims that $x\in\calI_\yes$, after which the verifier initiates an interactive protocol and decides whether to ``accept'' or ``reject'' the claim at the end. The protocol has completeness $c$, meaning that if $x\in\calI_{\yes}$, there exists a prover strategy under which the verifier accepts with probability at least $c$. We say that a protocol achieves \emph{perfect completeness} if the verifier accepts with certainty ($c=1$). The protocol also has soundness $s$, meaning that if $x\in \calI_\no$ then the verifier accepts with probability at most $s$ regardless of the prover's strategy. 

While quantum interactive proof systems are \emph{no more powerful} than their classical counterparts when the promise gap $c-s$ is at least inverse-polynomial, namely $\QIP=\PSPACE$~\cite{JJUW11} and $\IP=\PSPACE$~\cite{LFKN92,Shamir92}, quantum interactive proof systems exhibit several distinctive properties and subtleties. Perhaps the most striking is the \emph{parallelization}: any quantum interactive proof system can be parallelized to use only \emph{three} messages~\cite{KW00}. In comparison, \emph{constant}-message classical interactive proof systems can be parallelized to \emph{two} messages~\cite{Babai85,GS86}, whereas parallelizing classical interactive proof systems with a \emph{polynomial} number of messages to a constant number of messages would collapse the polynomial-time hierarchy to the second level~\cite{BHZ87}.

In addition to parallelization, the notion of \emph{public coins} becomes more subtle in the quantum setting. Classically, public coins mean that the verifier sends only uniformly random questions (i.e., its random coins), and every classical interactive proof system can be transformed into a public-coin protocol by adding two messages~\cite{GS86}. All \emph{three}-message quantum interactive proof systems can also be made public-coin~\cite{MW05}, and this idea was later extended to a turn-halving transformation~\cite{KKMV09}. Recursively applying this transformation yields an alternative three-message parallelization that simultaneously produces a public-coin protocol. However, for two-message quantum interactive proof systems, the notion of public coins is no longer unique: the verifier can prepare EPR pairs and send exactly one half of each pair, which can be viewed as \emph{quantum public coins}; the resulting model is denoted by \qqQAM{}. A classification theorem for constant-message public-coin quantum interactive proof systems was subsequently established in~\cite{KLGN19}, showing that such models reduce either to \PSPACE or to one of three two-message classes: \qqQAM, \QAM{}, and \QCAM{} (all messages are classical), with the verifier sending classical public coins in the latter two models.  

\paragraph{Achieving perfect completeness in quantum proof systems.}
Beyond parallelization and public coins, another important property is perfect completeness, which can always be achieved in classical proof systems~\cite{FGMSZ89}. The transformation of~\cite{KW00} achieves perfect
completeness by adding \emph{two} messages, while the subsequent transformation of~\cite{KLGN15} achieves this property by adding only \emph{one} message. Since the aforementioned parallelization results preserve perfect completeness, quantum interactive proof systems with at least \emph{three} messages can always achieve perfect completeness. 

For one-message proof systems, in which the only message is referred to as a \emph{witness}, it is well-known that $\MA=\MA_1$~\cite{ZF87,GZ11} in the classical setting, where the subscript $1$ denotes perfect completeness and (at most) constant soundness error. When the witness is \emph{classical}, $\QCMA=\QCMAone$ was established in~\cite{JKNN12}. In contrast, for the \emph{quantum}-witness setting, \QMA{}, whether perfect completeness can be achieved has been listed as an open problem since the early 2000s~\cite{Watrous00,AN02}. Indeed, all quantumly relativizing black-box techniques for achieving perfect completeness were ruled out in~\cite{Aaronson09}. Nevertheless, a variant of \QMA{} in which the verifier initially shares a constant number of EPR pairs with the prover can achieve perfect completeness~\cite{KLGN15}, while remaining no more powerful than \QMA{}~\cite{BSW11}. More recently, it was shown in~\cite{JW25} that the completeness error for \QMA{} can be made \emph{doubly exponentially small}, a bound that is \emph{optimal} among all relativizing black-box techniques~\cite{AHW25}. 

The least understood regime has been that of two-message quantum proof systems. It was established in~\cite{KLGN19} that $\QCMA=\QCMA_1$ implies that $\QCAM=\QCAM_1$, and the same approach can be extended to show $\QAM \subseteq \qqQAMone$. However, whether \QIPtwo{} or its public-coin variants \qqQAM{} and \QAM{} can achieve perfect completeness remains open, and these questions were (implicitly) posed as open problems in~\cite{JUW09,KLGN19}. 

\subsection{Main results}
In this work, we establish that the following four classes can achieve perfect completeness: 

\begin{restatable}{theorem}{qipPerfectCompleteness}
    \label{thm:QIP(2)-informal}
    $\QIPtwo=\QIPtwoone$.
\end{restatable}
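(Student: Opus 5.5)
The plan is to reduce \QIPtwo{} to a setting where the verifier's final measurement acts on a state that is efficiently preparable given the prover's message. Then I will apply a completeness-preserving turn-halving transformation that still leaves a two-message protocol.

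\textbf{Step 1: normal form.} Start from a two-message protocol with completeness $c$ and soundness $s$, amplified in parallel. Parallel repetition is valid for two-message quantum proofs. Recast it in the standard form: the verifier prepares a pure state $\ket{\psi_x}$ on registers $(\mathsf{V},\mathsf{M})$ by an efficient circuit and sends $\mathsf{M}$; the prover applies a unitary to $\mathsf{M}$ and a private register and returns $\mathsf{M}'$; the verifier applies $V_x$ to $(\mathsf{V},\mathsf{M}')$ and measures an output qubit. By Uhlmann's theorem, the optimal acceptance probability is $\max_\sigma \F(\rho_x,\sigma)^2$-type quantity, namely a maximum over states $\sigma$ on $\mathsf{V}$ consistent with accepting. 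Here $\rho_x$ is the reduced state of $\ket{\psi_x}$ on $\mathsf{V}$.

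\textbf{Step 2: convert to a \QMA-type acceptance operator and use the \QMA result.} The key structural fact is that the terminal state before measurement, $V_x(\ket{\psi_x}\otimes \text{prover's reply})$, is efficiently preparable once the prover's reply is fixed. I would encode the verifier's check as follows. The prover is asked to send back $\mathsf{M}'$ together with a purification witness. The verifier then checks two things: that the joint state on $\mathsf{V}\mathsf{M}'$ lies in the accepting subspace, and that its $\mathsf{V}$-marginal is untouched. The goal is an acceptance operator $A_x$ whose top eigenvalue is $1$ exactly in yes instances. To get this, I would invoke the paper's exactly constructible block-encoded matrix, built from the acceptance operator of the verification circuit, whose kernel certifies yes instances. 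The verifier tests membership in that kernel, for example by a phase-estimation-free exact reflection or a Grover-type exact amplitude amplification. In a yes instance an honest prover can steer the response into the kernel exactly, so the verifier accepts with certainty. Because the block encoding is exact, no completeness loss arises.

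\textbf{Step 3: turn-halving preserving two messages.} Apply the new turn-halving transformation in the spirit of \cite{KKMV09,MW05}. The verifier sends half-EPR pairs, or equivalently a random choice of whether to check the ``forward'' or the ``backward'' half. The prover must return a state that is consistent with the efficiently preparable terminal state. The verifier either runs the exact kernel test, or checks consistency by uncomputing the preparation circuit and checking for $\ket{0}$. In a yes instance both checks pass with probability $1$. In a no instance, a fidelity-based argument (e.g.\ $\F$ triangle inequality, as in \cite{KW00}) bounds the acceptance by $(1+\sqrt{s'})/2$-type quantities. Sequential or parallel repetition then reduces the soundness to a constant.

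\textbf{Main obstacle.} The hard part is Step 2: ensuring that the honest state lands \emph{exactly} in the accepting subspace even though the original completeness $c<1$. I would first try to define $A_x$ as the operator $\Tr_{\mathsf{M}'}$ of the accepting projector conjugated by $V_x$. I would then shift it to $A_x - c\,I$ and use the exact block encoding to decide whether its top eigenvalue is at least $c$. This must use a gap argument, where the no instance has top eigenvalue at most $s$, so the block encoding's kernel is nontrivial only in yes instances. Getting exactness with non-dyadic $c$ may require rounding $c$ to a dyadic rational and padding the protocol so that the top eigenvalue equals that value exactly.
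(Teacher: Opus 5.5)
\textbf{There is a genuine gap, and it is in your Step 2, which carries all the weight.}

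In a two-message protocol the verifier keeps $\mathsf{V}$. As your own Step 1 observes via Uhlmann's theorem, the states a prover can induce on $(\mathsf{V},\mathsf{M}')$ are exactly those with $\mathsf{V}$-marginal $\rho_x$. So the optimal acceptance probability is a constrained fidelity-type optimum, not the top eigenvalue of an operator applied to a freely chosen witness.

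\begin{itemize}
\item The paper's kernel construction for \QMA{} depends on the prover being able to send an \emph{arbitrary} state, namely the normalization of $(I+M)^d\ket{z}$. The analogous kernel vector on $(\mathsf{V},\mathsf{M}')$ will generally have the wrong $\mathsf{V}$-marginal. So the claim that ``an honest prover can steer the response into the kernel exactly'' is unjustified.
\item If you instead drop the marginal constraint, the problem becomes a free maximization over $(\mathsf{V},\mathsf{M}')$, which can be large on no instances. The constraint cannot be enforced by any test on the returned state.
\item Your operator $A_x=\Tr_{\mathsf{M}'}(V_x^\dagger\Pi V_x)$ acts on $\mathsf{V}$ alone and can have norm as large as $\dim\mathsf{M}'$. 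Neither it nor its shift $A_x-cI$ has any relation to the acceptance probability.
\item Rounding $c$ to a dyadic value does not help. The honest acceptance probability $p$ is prover-dependent and unknown to the verifier. This is exactly why the paper has the \emph{prover} append a calibration qubit that reads $1$ with probability $1/(2p)$. That makes the honest value exactly $1/2$ without increasing soundness.
\item The \QMA{} result is also gateset-dependent, whereas $\QIPtwo=\QIPtwoone$ holds for any reasonable gateset.
\end{itemize}

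\textbf{Step 3 has a second gap.} A verifier coin (or half-EPR pair) that selects a forward or a backward check is the middle-outward transformation of~\cite{KKMV09}. That transformation needs the prover to commit to the midpoint \emph{before} the coin, which costs three messages. If the coin is sent first, the prover knows which check will be run and can pass each one separately.

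Your plan also never produces the perfectly complete four-message protocol that is supposed to be halved. Moreover, ``efficiently preparable once the prover's reply is fixed'' is the wrong property: the verifier must be able to prepare the honest terminal state by itself.

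The paper obtains that protocol in two moves:
\begin{enumerate}
\item It applies the EPR-pair completion of~\cite{KW00} to the calibrated system. The verifier coherently records the decision in $\sfZ$, sends its purification, receives $\sfZ'$, and projects onto $\ket{\Phi^+}$. This gives perfect completeness, soundness $\tfrac12+\sqrt{s(1-s)}$, and an honest terminal state that is exactly $\ket{\Phi^+}$.
\item It then halves this four-message protocol inward. The verifier prepares $(\ket{00}_{\sfB\sfB'}\ket{\psi_\init}+\ket{11}_{\sfB\sfB'}\ket{\psi_\final})/\sqrt2$ and sends $\sfB'$ together with the message register. The prover runs branch $0$ forward and branch $1$ backward. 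The verifier applies its middle action only on branch $0$ and projects $(\sfB,\sfB')$ onto $\ket{\Phi^+}$. Soundness $(1+\sqrt{s})/2$ follows by bounding the cross term $\abs{\innerprod{\xi^1}{\xi^0}}^2$ by the original acceptance probability.
\end{enumerate}

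This coherent interference test between two verifier-preparable endpoints is the missing idea that keeps the result at two messages.
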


\begin{restatable}{theorem}{qqPerfectCompleteness}
    \label{thm:qqQAM-informal}
    $\qqQAM=\qqQAMone$.
\end{restatable}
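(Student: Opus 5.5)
The plan is to adapt the turn-halving idea of \cite{MW05,KKMV09} so that the number of messages never drops below two, and to combine it with a perfect-completeness transformation for the resulting object. Start from a \qqQAM{} protocol with completeness $c$ and soundness $s$. By standard parallel repetition for \qqQAM{} \cite{KLGN19} we may assume $c \geq 1-2^{-p}$ and $s \leq 2^{-p}$ for a suitable polynomial $p$. In such a protocol the verifier's message is half of $m$ EPR pairs. The prover applies a unitary on his halves and private space and returns a register $\mathsf{Y}$. The verifier then applies $V$ on $(\mathsf{R},\mathsf{Y})$, where $\mathsf{R}$ holds the retained EPR halves, and measures an output qubit. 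The optimal acceptance probability is governed by the largest eigenvalue-type quantity of the acceptance operator $Q = \Tr_{\mathrm{out}}(V^\dagger \Pi_{\mathrm{acc}} V)$ acting on $\mathsf{R}\otimes\mathsf{Y}$. Because $\mathsf{R}$ is maximally mixed from the prover's view, the relevant figure of merit is $\max \Tr(Q\rho)$ over states $\rho$ with $\Tr_{\mathsf{Y}}\rho = I/2^m$.

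\textbf{Step 1 (reduce acceptance to a kernel condition).} Mirroring contribution (1) for \QMA, I would build from $V$ an exactly implementable block-encoding of a PSD operator $A$ on $\mathsf{R}\otimes\mathsf{Y}$, for instance a polynomial in $I-Q$ realised with gates from a fixed exact gate set. The requirement is that in the yes case some valid state ($\mathsf{R}$-marginal maximally mixed) lies exactly in $\ker A$, while in the no case every valid state has $\Tr(A\rho)$ bounded below. The key trick I would try is to shift by the known completeness: rather than rounding, project onto the eigenspace of $Q$ above a threshold. I would do this exactly via a polynomial $f$ with $f=0$ on $[c',1]$ for some $c'$ and $f\ge$ const on $[0,s']$. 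Since exactness needs the yes-witness to have all its weight in the high eigenspace, I would allow the prover to send the projected (renormalised) state. This requires the projected state to still have maximally mixed $\mathsf{R}$-marginal. Controlling that marginal is delicate, and I would handle it by letting the verifier's EPR pairs be twirled, or by verifying the marginal through a swap-type symmetric test that accepts the maximally mixed marginal with certainty.

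\textbf{Step 2 (new two-message protocol with perfect completeness).} The new verifier again sends halves of EPR pairs, and the honest prover returns $\mathsf{Y}$ together with any ancilla needed for the block-encoding. The verifier then applies the exact block-encoding circuit of $A$ and accepts iff the signal register is found in the ``kernel'' flag, using a exactly implementable measurement of $\ker A$, e.g. one Hermitian reflection based on $A$. This succeeds with probability exactly $1$ when the state lies in $\ker A$. Soundness follows from the spectral lower bound on valid states plus the observation that any cheating prover still yields a state with maximally mixed $\mathsf{R}$-marginal. Rejection probability is thus at least inverse polynomial, which is then amplified by parallel repetition, since perfect completeness is preserved.

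\textbf{Main obstacle.} The hard part is Step 1's exactness under the marginal constraint. The yes-witness must lie \emph{exactly} in the kernel, yet the honest prover's optimal state in the original protocol only achieves $c<1$, and the constraint couples the witness with the verifier's fixed quantum coins. I would first try to remove the constraint by purifying it into the operator, replacing $Q$ with the Choi-type operator of the prover-to-verifier map, so that the problem becomes an unconstrained kernel question as in the \QMA{} case. If that fails, I would route through \QIPtwo{} using turn-halving (contribution (2)): the terminal state before measurement is efficiently preparable from the honest interaction, so after the transformation one can apply the \QMA-style kernel construction to it.
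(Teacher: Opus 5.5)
There is a genuine gap. Step~1 carries the whole proof, and it fails as stated. A polynomial $f$ with $f=0$ on an interval $[c',1]$ is identically zero. So no polynomial in $I-Q$ can exactly annihilate the high-eigenvalue part of $Q$ while staying bounded below on $[0,s']$; exact spectral projection is not available. The paper's \QMA{} construction avoids this with a rank-one perturbation certified by an exact integer label, but then $\ker(K)$ is spanned by one specific vector $\propto (I+M)^d\ket{z}$.

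More importantly, in \qqQAM{} the prover does not choose the joint state on $(\sfR,\sfY)$. Every achievable state has $\sfR$-marginal exactly $I/2^m$. In general, neither a fixed kernel vector nor the renormalized projection of the honest state onto a high eigenspace satisfies this. You flag this yourself, but twirling, a swap-type marginal test and a ``Choi-type'' reformulation are left unspecified. It is not clear that any of them keeps both acceptance probability exactly $1$ and the quantum-public-coin form. Your route would also depend on exact block encodings over a fixed exact gateset, whereas the paper's \qqQAM{} result holds for any reasonable gateset.

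The fallback does not help either. Showing that a \qqQAM{} problem lies in \QIPtwoone{} does not place it in \qqQAMone{}. The paper also points out that inward turn-halving destroys the quantum-public-coin form: the forward branch sends $I_\sfW/\dim(\sfW)$ while the backward branch sends $I_\sfZ/2$, and these have different dimensions. Here $\sfW$ is the verifier's retained EPR halves, your $\sfR$. Moreover, the terminal state is efficiently preparable only after calibration and EPR-pair completion, not in the original interaction. Turn-halving already requires and returns a perfectly complete system, so nothing is left for a kernel construction to fix.

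You are missing two ideas.
\begin{enumerate}
\item[(i)] The exact-half calibration. The prover appends one qubit so that the honest yes-case acceptance probability is exactly $1/2$, without the verifier knowing $p$. This turns the yes condition into the exact equality $\calR(\rho)=I_\sfZ/2$ for the calibrated decision qubit.
\item[(ii)] Packaging both endpoint conditions, $\Tr_{\setminus\sfW}(\rho)=I_\sfW/\dim(\sfW)$ and $\calR(\rho)=I_\sfZ/2$, into one flagged channel $\calN_\qq$ whose fixed target $\sigma_\qq$ is maximally mixed.
\end{enumerate}
The \FixedCI{} protocol for $Q_\qq$ sends EPR halves purifying $\sigma_\qq$, receives the non-output qubits, applies $Q_\qq^\dagger$, and accepts on all zeros. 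This is itself a \qqQAM{} protocol, and it accepts an exact preimage with certainty. Soundness $s_\star$ then follows from four fidelity facts: splitting the fidelity over the flag sectors, the fidelity extension identity, data processing for $\calR$, and the sum-of-squared-fidelities inequality. No block encodings are needed.
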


More specifically, applying our technique to a \QIPtwo{} proof system with completeness at least $2/3$ and soundness at most $1/3$ yields a new $\QIPtwo_1$ proof system with soundness at most 
\[ s_\star = \frac{1}{2}\rbra[\bigg]{1 + \frac{1+\sqrt{2}}{\sqrt{6}}} < 0.993.\] 
Our transformation for \qqQAM{} achieves the same soundness threshold. By further combining our transformations with the corresponding parallel-repetition procedures in~\cite[Theorem 6]{KW00} and~\cite[Lemma 3.4]{KLGN19}, respectively, the soundness error of the resulting two-message quantum proof systems can be reduced to at most $1/3$. 

\vspace{1em}
While our results for \QIPtwo{} and \qqQAM{} hold for \emph{any} reasonable gateset, achieving perfect completeness for \QMA{} and \QAM{} is \emph{gateset-dependent}:

\begin{restatable}{theorem}{qmaPerfectCompleteness}
    \label{thm:QMA-informal}
    $\QMA^{\calG} = \QMA^{\calG}_1$.
\end{restatable}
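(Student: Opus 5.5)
The plan is to realize the abstract's first contribution: from a \QMA{} verifier with gateset $\calG$, build an exactly constructible block-encoded matrix whose kernel is nonzero exactly on \emph{yes} instances. Then design a new verifier that accepts with certainty on kernel vectors and rejects noticeably on \emph{no} instances.

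\textbf{Step 1: the amplified acceptance operator.} Start from a \QMA{} verifier $V_x$ over $\calG$ and first amplify it, using in-place (Marriott--Watrous) amplification so the witness length is unchanged. After amplification the acceptance operator $M_x = \bra{0}V_x^\dagger \Pi_{\mathrm{acc}} V_x\ket{0}$ has maximum eigenvalue at least $1-2^{-p}$ on \emph{yes} instances and at most $2^{-p}$ on \emph{no} instances. Since $V_x$ is built from $\calG$, the operator $M_x$ has an exact block-encoding with entries in a ring determined by $\calG$; for example, $\bbZ[1/\sqrt2]$ for Clifford$+T$, or an algebraic number field more generally. This is where gateset dependence enters.

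\textbf{Step 2: a matrix whose kernel certifies \emph{yes}.} Using quantum singular value transformation with exactly implementable phases, build a block-encoding of $A_x = f(M_x)$. Here $f$ is a polynomial with $f(\lambda)\ge \delta$ for $\lambda\le 2^{-p}$ and $f$ close to $0$ near $1$. The difficulty is that near $1$ is not $0$: completeness is imperfect, so the top eigenvalue need not be exactly $1$, and $A_x$ has no exact kernel. To fix this, I would instead take $f$ to vanish \emph{exactly} at the true top eigenvalue. This requires knowing that eigenvalue exactly, which is impossible in general.

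The alternative is the key idea I would try. Because $M_x$ has entries in a fixed algebraic ring with bounded denominators, its characteristic polynomial has algebraic-integer-like coefficients. Consider the operator $B_x = \prod_{j}(M_x - \mu_j)$ over the candidate top eigenvalues, or better $\det$-style constructions. These give exactly constructible combinations such that some exactly computable matrix (e.g.\ $I - M_x^{\otimes k}$ symmetrized, or a Galois-conjugate product) has a nontrivial kernel precisely when an eigenvalue lies in the top window. A Liouville-type separation bound for algebraic numbers of bounded height ensures the \emph{no} case stays bounded away from singular. Concretely, the witness for the new verifier is a kernel vector $\ket{\psi}$ of an exactly block-encoded PSD matrix $A_x$. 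On \emph{yes} instances $\ker A_x \ne 0$; on \emph{no} instances $A_x \succeq \delta I$ with $\delta \ge 2^{-\poly}$.

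\textbf{Step 3: the perfect-completeness verifier.} The verifier applies the exact block-encoding $U_{A}$ to $\ket{\psi}\ket{0}$ and accepts iff the block-encoding ancilla is not in $\ket{0}$, i.e.\ iff the encoded output $A_x\ket\psi$ is zero. Equivalently, it measures $\ket{0}\bra{0}\otimes A_x^2$ and accepts on outcome zero. Exactness of the block-encoding over $\calG$ gives acceptance probability exactly $1$ for kernel vectors. In the \emph{no} case every witness is rejected with probability at least $\delta^2/\alpha^2$, where $\alpha$ is the subnormalization. This is only inverse-exponential, so I would boost it by repeating QSVT, for example by encoding $\mathrm{sign}$-like polynomials of $A_x/\alpha$ that map $[\delta,1]$ near $1$ while keeping $0\mapsto 0$ exactly. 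Odd polynomials do this automatically, and a degree of $\poly(1/\delta)$ would be too large. So I would instead rely on $\delta$ being inverse-polynomial after Step 1's amplification, then apply error reduction for $\QMA_1$ via standard parallel repetition, which preserves perfect completeness.

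\textbf{Main obstacle.} The crux is Step 2: producing an \emph{exact} kernel from an operator whose top eigenvalue is only approximately $1$, while keeping the \emph{no}-case spectral gap inverse-polynomial. I expect this to require the algebraic structure of $\calG$ and a careful spectral-separation argument.
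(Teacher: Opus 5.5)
Your overall frame, in-place amplification followed by an exact kernel test on an exactly block-encoded matrix, matches the paper. However, there is a genuine gap at the step you yourself flag as the crux, and the candidates you list for Step 2 cannot work. Operators built spectrally from $M$, such as $\prod_j(M-\mu_j)$, a symmetrized $I-M^{\otimes k}$, or a Galois-conjugate product, are singular only if a polynomial with exactly implementable coefficients vanishes exactly at eigenvalues of $M$. With imperfect completeness, the top eigenvalue of $M$ is in general an algebraic number of degree up to $2^w$ over $\mathbb{Q}(i)$. So neither party can specify it, or a polynomial vanishing at it, with polynomially many bits and exact coefficients; for instance, $I-M^{\otimes k}$ has a kernel only if $\norm{M}=1$. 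A Liouville-type bound would moreover give only $\delta\geq 2^{-\poly(n)}$, an inverse-exponential gap that parallel repetition cannot restore. Nothing in your construction supports the later switch to an inverse-polynomial $\delta$.

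The idea you are missing is to obtain exactness from a \emph{matrix entry} rather than an eigenvalue, using a rank-one term that is not a function of $M$.
\begin{itemize}
\item Over $\calG$, $2^hM$ has Gaussian-integer entries, so $m = 2^{hd}\braket{z}{(I+M)^d}{z}$ is a positive integer with at most $(h+1)d+1$ bits. The prover sends $m$ as a classical label together with $z$.
\item Take the dyadic coefficients $\alpha = m/2^{k+1}$ and $\alpha\beta=2^{hd-k-1}$, where $2^{k-1}\leq m<2^k$. Then the non-Hermitian matrix $K=\alpha(I-M)^d\bigl(I-\beta(I+M)^d\ketbra{z}{z}\bigr)$ annihilates $(I+M)^d\ket{z}$ exactly.
\item $K$ has an exact block encoding built from POVM block encodings of $I-M$ and $I-M^2$, their products, and an exact dyadic LCU.
\item Choose $d=w+3$ and $z$ maximizing the diagonal entry. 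On yes instances this gives $m/2^{hd}\geq 2^{-w}(1+\norm{M})^d\geq 4$. The verifier can therefore reject unless $m\geq 2^{hd+2}$, which forces $\beta\leq 1/4$ for every label a cheating prover might send.
\item On no instances, $M\preceq I/(2d)$ then gives $\norm{K\ket{\psi}}\geq\alpha(1-1/(2d))^d-\alpha\beta\geq 1/16$ for all $\ket{\psi}$.
\end{itemize}
This yields a constant soundness gap without any Liouville bound or QSVT.
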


Here, we use the fixed gateset $\calG\coloneqq\cbra{\Had,X,\Toffoli,S}$, where $S\coloneqq\diag(1,i)$.\footnote{$\QMA_1^\calG$ is equivalent to the number-field-dependent $\QMA_1^{\calG_4}$ over $\bbQ(i)$ in~\cite{Rudolph26}.} Applying our technique to a $\QMA^\calG$ proof system with completeness at least $3/4$ and soundness at most $1/4$ yields a new $\QMA^\calG_1$ proof system with soundness at most $1-2^{-16}$. 

Following the lifting argument from \QCMA{} to \QCAM{} in~\cite[Theorem 1.6]{KLGN19}, one can eliminate the completeness error caused by the verifier's random challenges (i.e., classical public coins) using the textbook technique proving $\BPP \subseteq \Sigma_2 \cap \Pi_2$~\cite{Lautemann83,Sipser83}, thereby extending perfect completeness from \QMA{} to \QAM{}:

\begin{restatable}{corollary}{qaaPerfectCompleteness}
    \label{thm:QAM-informal}
    $\QAM^{\calG} = \QAM^{\calG}_1$.
\end{restatable}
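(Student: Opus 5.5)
We derive the statement from \cref{thm:QMA-informal} and the lifting argument of~\cite[Theorem 1.6]{KLGN19}. The inclusion $\QAM^\calG_1\subseteq\QAM^\calG$ is immediate, so it remains to show $\QAM^\calG\subseteq\QAM^\calG_1$. Take a $\QAM^\calG$ system in which the verifier sends a uniformly random string $r\in\binset^{m}$, the prover answers with a quantum state, and the verifier runs a $\calG$-circuit $V_{x,r}$. By standard parallel repetition of \QAM{} (the prover answers each independent challenge in its own register), we first make the completeness at least $1-2^{-p}$ and the soundness at most $2^{-p}$ for a polynomial $p$ that we will fix later.

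For each $r$ consider the one-message verifier $V_{x,r}$. Let $a(r)$ be its maximum acceptance probability, and call $r$ \emph{good} if $a(r)\ge 3/4$. By Markov-type averaging, on yes instances all but a $2^{-p+2}$ fraction of $r$ are good. On no instances all but a $2^{-p+2}$ fraction satisfy $a(r)\le 1/4$. We then apply the Lautemann--Sipser step. The new verifier sends $\ell=\poly(m)$ strings $y_1,\dots,y_\ell\in\binset^m$. The prover replies with one shift $z\in\binset^m$, committed classically by measuring in the computational basis, together with quantum witnesses for every instance $V_{x,y_i\oplus z}$. For yes instances, a union bound over $z$ shows that with probability $1$ over $y$ there is a $z$ such that every $y_i\oplus z$ is good. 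This holds deterministically, not just with high probability, because the counting argument proves that the set of bad $y$-tuples is empty once $\ell\cdot 2^{-p+2}$ is small. Precisely, for every $y$-tuple some $z$ works provided $2^m\cdot(2^{-p+2})^{\ell}<1$ and $\ell\cdot 2^{-p+2}<1$. We arrange this by taking $\ell = m$ and $p = m+3$.

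Perfect completeness then comes from applying \cref{thm:QMA-informal} pointwise. We transform each $V_{x,r}$ into its $\QMA^\calG_1$ counterpart $V'_{x,r}$, which accepts good $r$ with certainty and has soundness $1-2^{-16}$ whenever $a(r)\le 1/4$. The transformation must be uniform in $r$: the construction of \cref{thm:QMA-informal} acts on a $\calG$-circuit specification, so it can be applied to the circuit $V_{x,r}$ computed from $(x,r)$, with $r$ hard-wired classically. The new verifier runs $V'_{x,y_i\oplus z}$ for some $i$ chosen uniformly at random, or alternatively runs all of them and accepts iff all accept. In the yes case this accepts with probability exactly $1$.

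For soundness, fix $y$ and any $z$. In the no case, each $z$ makes some $y_i\oplus z$ bad with high probability over $y$. We need this for every $z$ simultaneously: for fixed $z$ the events are independent, so the probability that at most $\ell/2$ of the indices are bad is tiny, and a union bound over the $2^m$ values of $z$ finishes the argument. So with high probability over $y$, every $z$ leaves at least half of the indices $i$ with $a(y_i\oplus z)\le 1/4$. Choosing a random index then hits such an $i$ with probability at least $1/2$, where acceptance is at most $1-2^{-16}$. This gives soundness at most $1-2^{-17}+o(1)$, a constant bounded away from $1$, and further reduction is possible by parallel repetition, which preserves perfect completeness.

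The step most likely to be an obstacle is making \cref{thm:QMA-informal} exact for all good $r$ at once. That result certifies perfect acceptance only under its promise, completeness at least $3/4$, with the exact gateset $\calG$. We must also ensure that the repetition used to amplify the \QAM{} system stays within $\calG$, which it does since repetition only composes circuits, and that the entangled cross-instance witnesses do not help. The latter holds because the randomly chosen single index is checked against its own register only.
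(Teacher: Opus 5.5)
Your strategy matches what the paper has in mind. You apply the transformation behind \Cref{thm:QMA-informal} pointwise to the coin-fixed circuits $V_{x,r}$. You then remove the coin-induced completeness error with Lautemann--Sipser shifts: the verifier sends $y_1,\dots,y_\ell$ and the prover answers with one classical $z$, as in~\cite[Theorem~1.6]{KLGN19}. The completeness covering argument is correct: the failing $z$ form $\bigcup_i(B\oplus y_i)$ for the bad-coin set $B$, so some $z$ survives for \emph{every} $y$-tuple once $\ell\cdot 2^{-p+2}<1$. This is counting over $i$, not a union bound over $z$. The union bound over $z$ for soundness and your treatment of entangled cross-index witnesses are also correct.

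What fails is the choice $p=m+3$. Since $a(r)$, the shifts and $z$ live in the coin space of the \emph{amplified} system, $m$ is the amplified coin length. It grows linearly in the number $k$ of repetitions, while the error only decays like $2^{-\Omega(k)}$, so the choice is circular. It is in fact unattainable. A bad-coin fraction of at most $2^{-p+2}=2^{-m-1}<2^{-m}$ means there are no bad coins at all. Then fixing $r=0^m$ would give a \QMA{} verifier for the problem, so a general amplification of this kind would prove $\QAM\subseteq\QMA$. Reading $m$ as the coin length before repetition does not help either. The shifts and $z$ must then range over $\binset^{km}$, and with only $\ell=m$ shifts the soundness bound $2^{km}(2^{-m-1})^{m}$ is below $1$ only if $k\leq m$. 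That is far fewer repetitions than majority voting needs to reach error $2^{-(m+3)}$ from the $(2/3,1/3)$ gap.

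The repair is the standard one from Lautemann's proof: you never need exponentially small error. Set $q\coloneqq 2^{-p+2}$ and $\ell=m$, the amplified coin length.
\begin{itemize}
\item Completeness needs only $\ell q<1$.
\item If you accept iff all $\ell$ tests accept, soundness needs $2^m q^\ell=(2q)^m\ll 1$, and any $q\leq 1/4$ works. Your condition $2^m(2^{-p+2})^\ell<1$ is exactly this soundness condition, not a completeness one.
\item For the random-index variant, soundness needs $2^m\cdot 2^\ell q^{\ell/2}=(4\sqrt{q})^m\ll 1$, and any $q\leq 1/64$ works.
\end{itemize}
Hence $p=\lceil\log m\rceil+O(1)$ suffices. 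Because $m=km_0$ grows only linearly in $k$ while the error is $2^{-\Omega(k)}$, $k=O(\log m_0)$ repetitions of the base system with coin length $m_0$ achieve it. The accept-iff-all-accept variant is also the cleaner one over $\calG$, since it avoids sampling an exactly uniform index when $\ell$ is not a power of two. With this correction your argument goes through.
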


\subsection{Proof overview: Achieving perfect completeness for \QIPtwo{} and \qqQAM{}}
\label{subsec:proof-overview}

The proof of \Cref{thm:QIP(2)-informal} begins with the following two known elementary transformations and then applies our new turn-halving transformation.

\begin{description}
    \item[Exact-half calibration.] The first transformation is the \emph{single-qubit exact-half calibration} underlying the notion of \emph{perfect rewindability} for quantum interactive proof systems introduced in~\cite{KKMV09,KLGN15}. For two-message quantum proof systems, this transformation adds a qubit to the prover's response.\footnote{This transformation applies to any quantum proof system with at least two messages in which the last message is sent by the prover, while preserving the number of messages.} For \emph{yes} instances with acceptance probability $p \geq 2/3$, the honest prover is expected to initialize the additional qubit as $\ket{1}$ with probability $1/(2p)$, and $\ket{0}$ with the remaining probability. The verifier accepts only if both the new qubit is measured to be $1$ and the original test accepts.\footnote{The reason that the creation of the qubit is left to the prover is that in general the verifier does not know $p$, which is prover-dependent.} Consequently, the resulting \emph{two-message} proof system has completeness exactly $1/2$. For \emph{no} instances, the additional test can only decrease the acceptance probability, even if the new qubit is entangled with the rest of the prover's message. 

    \item[EPR-pair completion.] The second transformation specializes the technique for achieving perfect completeness in~\cite{KW00} to a proof system with completeness exactly $1/2$. In particular, the verifier $V$ executes the calibrated proof system \emph{without} performing the final measurement and coherently records the decision in a single-qubit register $\sfZ$. Next, $V$ sends to the prover all qubits in the message and its private register that purify $\sfZ$, receives a single-qubit register $\sfZ'$, and projects the state on $(\sfZ, \sfZ')$ onto an EPR pair. For \emph{yes} instances, the terminal state on $(\sfZ, \sfZ')$ is \emph{exactly} an EPR pair, so $V$ accepts with certainty, achieving perfect completeness. For \emph{no} instances, if the original proof system has soundness $s < 1/2$, Uhlmann's theorem~\cite{Uhlmann76} implies that the resulting proof system has soundness at most $g_\EPR(s)\coloneqq \frac{1}{2}+\sqrt{s(1-s)}$, which equals $\frac{1}{2}+\frac{\sqrt{2}}{3}$ when $s = 1/3$. 
\end{description}

\subsubsection{An endpoint-inward turn-halving transformation}
\label{subsubsec:inward-turn-halving}

The starting observation is that the terminal state produced by EPR completion is exactly the EPR pair $\ket{\Phi^+}_{(\sfZ, \sfZ')} \coloneqq (\ket{00}+\ket{11})/\sqrt{2}$ for \emph{yes} instances. This \emph{efficiently preparable} terminal state enables our new \emph{endpoint-inward} turn-halving transformation, in which the verifier starts coherently from the initial state $\ket{\psi_\init}$ and the terminal state $\ket{\psi_\final}$ in the message and its private registers, while the prover must make the two branches agree at the midpoint state $\rho_\midd$. By contrast, the known turn-halving transformation is \emph{middle-outward}~\cite{KKMV09}, the verifier tosses a public coin to choose whether to check the interaction forward or backward from the prover-supplied midpoint state $\rho_\midd$. These two orientations are illustrated in \Cref{fig:midpoint-orientations}.

\begin{figure}[!ht]
\centering
\begin{tikzpicture}[
  >=stealth,
  state/.style={draw,rounded corners,align=center,inner sep=5pt,
                text width=3.1cm},
  kkmv/.style={->,thick,draw=black!70},
  folded/.style={->,very thick,draw=NavyBlue},
  every node/.style={font=\normalsize}
]
  \node[state] (init) at (-5.1,0)
    {Fixed initial endpoint $\ket{\psi_\init}$};
  \node[state] (mid) at (0,0)
    {One common midpoint $\rho_\midd$};
  \node[state] (term) at (5.1,0)
    {Fixed final endpoint $\ket{\psi_\final}$};
  \draw[kkmv,bend right=22] (mid) to
    node[above,align=center] {\cite{KKMV09}: $b=1$\\simulate \emph{backward}} (init);
  \draw[kkmv,bend left=22] (mid) to
    node[above,align=center] {\cite{KKMV09}: $b=0$\\simulate \emph{forward}} (term);
  \draw[folded,bend right=22] (init) to
    node[below,align=center] {\underline{This work}: first branch\\\emph{forward}} (mid);
  \draw[folded,bend left=22] (term) to
    node[below,align=center] {\underline{This work}: second branch\\\emph{backward}} (mid);
\end{tikzpicture}
\caption{Turn-halving transformations with different orientations.}
\label{fig:midpoint-orientations}
\end{figure}

\paragraph{Inward turn-halving transformation.}

Starting from a \emph{four}-message quantum proof system obtained after calibration and completion, the verifier prepares a coherent superposition of $\ket{\psi_\init}$ and $\ket{\psi_\final}$, with the two branches labelled by single-qubit registers $\sfB$ and $\sfB'$, where the prover does not touch $\sfB$ but receives the qubit in $\sfB'$. The prover executes the $\ket{\psi_\init}$ branch forward through the first half of the original interaction and the $\ket{\psi_\final}$ branch backward through the second half. Perfect completeness guarantees that, for \emph{yes} instances, the two branches reach the same midpoint state, after which $(\sfB, \sfB')$ contains an EPR pair. For \emph{no} instances, the two unnormalized branch states
have norm at most $1$ and overlap of magnitude at most
$\sqrt{s}$, implying that the resulting proof system has
soundness at most
\[ g_\inward(s) \coloneqq \frac{1+\sqrt{s}}{2}.\] 
Consequently, the transformation folds four messages into \emph{two} while preserving perfect completeness, with resulting soundness $s_\star \coloneqq g_\inward(g_\EPR(1/3)) < 1$. 

\paragraph{Achieving perfect completeness in \texorpdfstring{\qqQAM}{qq-QAM}.}
The exact-half calibration and EPR-pair completion also apply to \qqQAM proof systems, but the inward turn-halving transformation \emph{no longer} admits the quantum-public-coin form: although both forward and backward branches begin with maximally mixed states, these states have \emph{different dimensions} whenever the verifier sends more than one \emph{quantum} public coin (i.e., half of an EPR pair). 

We instead encode both endpoint conditions into a \emph{fixed-target} instance of \textsf{Close Image} from~\cite{Wat02a,HMW14}. Let $\sfW$ denote the verifier's private register containing the retained halves of the original EPR pairs in the \qqQAM{} proof system, and let the single-qubit register $\sfZ$ contain the calibrated decision state $\calR(\rho)$ for a proposed midpoint $\rho$. A perfect midpoint must satisfy both $\Tr_{\setminus\sfW}(\rho) = I_\sfW / \dim(\sfW)$ and $\calR(\rho)=I_\sfZ/2$, where $\Tr_{\setminus\sfW}(\rho)$ denotes tracing out all registers of $\rho$ except $\sfW$. We then mimic the inward turn-halving transformation by defining a quantum channel that pads the missing qubits at either endpoint, thereby aligning their dimensions, with $\sfF$ denoting a single-qubit flag register:\footnote{A similar two-boundary construction also appears in the \qqQAM{}-hardness of the \textsc{Close Image to Totally Mixed} problem in~\cite[Lemma 4.2]{KLGN19}.}
\[ \calN_\qq(\rho) \coloneqq \frac{1}{2} \ketbra{0}{0}_\sfF \otimes \frac{I_\sfZ}{2} \otimes \Tr_{\setminus\sfW}(\rho) + \frac{1}{2} \ketbra{1}{1}_\sfF \otimes \calR(\rho)_\sfZ \otimes \frac{I_\sfW}{\dim(\sfW)}. \]
Consequently, a midpoint satisfying both conditions is mapped
to the fixed target state
\[ \sigma_\qq \coloneqq \frac{I_\sfF}{2} \otimes \frac{I_\sfZ}{2} \otimes \frac{I_\sfW}{\dim(\sfW)}. \]

The resulting proof system follows immediately from the \QIPtwo{} containment of \textsc{Close Image}~\cite{Wat02a,HMW14} and its \qqQAM{} adaptation in~\cite[Lemma 4.1]{KLGN19}: the verifier (i) sends the halves of EPR pairs purifying $\sigma_\qq$, (ii) receives the non-output qubits of the mixed-state quantum circuit implementation $Q_\qq$ of $\calN_\qq$, and (iii) applies $Q_\qq^\dagger$. Exact preimages yield perfect completeness for \emph{yes} instances, Uhlmann's theorem and other basic properties of the fidelity yield the desired soundness bound $s_\star$ for \emph{no} instances.

\subsection{Proof overview: Achieving perfect completeness for \QMA{}}
\label{subsec:qma-proof-overview}

We start from the acceptance operator $M \coloneqq \bra{\bar{0}} V_x^\dagger \ketbra{1}{1}_\Out V_x \ket{\bar{0}}$ of a \QMA{} verification circuit $V_x$ over the gateset $\calG$, where $x\in\calI$ for a promise problem $\calI\in\QMA$, $w$ denotes the number of qubits used by the witness, and $M$ is a $2^w\times 2^w$ positive semidefinite matrix. 
Using error reduction for \QMA{}~\cite[Theorem 3.3]{MW05}, we may assume $\sigma_{\max}(M) = \norm{M} \geq 1-1/(2d)$ for \emph{yes} instances and $M \preceq I/(2d)$ for \emph{no} instances, where $d\coloneqq w+3$ and $\sigma_{\max}(M)$ denotes the largest singular value of $M$. 
To establish \Cref{thm:QMA-informal}, we construct a $2^w\times 2^w$ matrix $K$, specified by polynomial-length classical labels $(z,m)$ and satisfying the following properties: 
\begin{itemize}
    \item For \emph{yes} instances, there exists a valid label $(z,m)$ such that $\ker(K) \neq \cbra{0}$.    
    \item For \emph{no} instances, every valid label $(z,m)$ satisfies $\sigma_{\min}(K) \geq 1/16$.
\end{itemize}
In particular, for \emph{yes} instances, the $\QMA_1$ proof consists of  classical labels $(z,m)$ and a quantum state $\ket{\psi_z}$ in the kernel of $K$. 
At a high level, the construction of $K$ in \Cref{eq:intro-qma-matrix} serves the same purpose as the \emph{shifted-OR trick} in the proof of
$\MA\subseteq\MA_1$~\cite{ZF87,GZ11}, namely to eliminate the remaining completeness error while preserving a soundness gap.

\paragraph{Constructing the matrix $K$ whose kernel certifies \emph{yes} instances.} We begin by defining the matrix $K$ which is specified by the classical labels $(z,m)$:
\begin{equation}
    \label{eq:intro-qma-matrix}
    K\coloneqq\alpha
    \underbrace{(I-M)^d}_{\text{OR-type repetition}}
    \underbrace{\rbra*{I-\beta(I+M)^d\ketbra{z}{z}}}_{\text{rank-one perturbation of }I}.
\end{equation}
Here, $\alpha$ and $\beta$ are positive numbers to be specified later.
The two factors have distinct roles:
\begin{itemize}
 \item \emph{OR-type repetition.} The first factor reduces completeness error. An exact block encoding of $(I-M)^d$,\footnote{See \Cref{def:block-encoding} for a formal definition of exact block encoding.} with all ancillary qubits initialized to $\ket{0}$ and rejection when all ancillary qubits are measured to be zero, yields acceptance probability $1-(1-\lambda)^{2d}$ on a normalized eigenvector of $M$ with eigenvalue $\lambda$, matching the OR of $2d$ independent trials with success probability $\lambda$.\footnote{This polynomial alone gives perfect completeness only when $\lambda=1$.}
 \item \emph{Rank-one perturbation.} For a \emph{yes} instance, to achieve perfect completeness, we choose $z$ maximizing $\braket{z}{(I+M)^d}{z}$ and set
 \[
  \beta=\frac{1}{\braket{z}{(I+M)^d}{z}}\quad\text{and}\quad
  \ket{\psi_z}=\frac{(I+M)^d\ket{z}}{\norm*{(I+M)^d\ket{z}}}.
 \]
 With this choice of $\beta$, the second factor annihilates $\ket{\psi_z}$, giving $K\ket{\psi_z}=0$ for any $\alpha>0$. We specify below how the classical certificate $m$ determines $\alpha$ and $\beta$.
\end{itemize}

\paragraph{Implementing the new verification circuit $V'_x$.}
To implement $K$ as a verification circuit, it remains to show that $\alpha$ and $\beta$ \emph{require only polynomial-bit precision}. 
Let $h$ denote the number of Hadamard gates in $V_x$, so that $2^hM$ has Gaussian integer entries. For an integer $m$ represented with $\ell$ bits satisfying $2^{hd+2}\leq m<2^\ell$, where $\ell\coloneqq(h+1)d+1$, let $k$ be the number of bits needed to represent $m$ in binary, excluding leading zeros. Then, $2^{k-1}\leq m<2^k$ and $k\leq\ell$. 

We now choose appropriate $\alpha$ and $\beta$ such that every valid pair of labels satisfies $1/4 \leq \alpha < 1/2$ and $0 < \beta \leq 1/4$. With these choices, the coefficients $\alpha$ and $\alpha\beta$ are \emph{dyadic} and satisfy
\[
    \alpha+\alpha\beta=\frac{m+2^{hd}}{2^{k+1}}\leq2\alpha<1,
    \quad\text{where } \alpha \coloneqq \frac{m}{2^{k+1}} \text{ and } \beta \coloneqq \frac{2^{hd}}{m}.
\]

We construct an exact block encoding $U_K$ of $K$ over $\calG$ using block encodings of POVM operators and their products~\cite[Lemmas~26 and~30]{GSLW19}, together with exact dyadic linear combinations (\Cref{lemma:uniform-selector}) based on the LCU construction~\cite{BCCKS15,GSLW19} and the dyadic-amplitude construction of~\cite{Rudolph26}. The resulting circuit implementation uses $O(d)$ calls to $V_x,V_x^\dagger$, and polynomially many one- or three-qubit gates from $\calG$ and ancillary qubits.

The new verifier receives the classical labels $(z,m)$ and a $w$-qubit quantum witness $\ket{\psi}$, and proceeds as follows:
\begin{enumerate}[label={\upshape(\roman*)}]
    \item Measure the classical labels and reject unless $2^{hd+2}\leq m<2^\ell$.
    \item Construct an exact block encoding $U_K$ of $K$ by first determining the dyadic coefficients $\alpha$ and $\alpha\beta$.
    \item\label{item:exact-kernel-test} Apply $U_K$ to $\ket{\psi}$ and the ancillary qubits of $U_K$ initialized to $\ket{\bar{0}}$, and reject if measuring these ancillary qubits yields the all-zero outcome.
\end{enumerate}

It is noteworthy that Step \ref{item:exact-kernel-test} follows the \emph{exact kernel test} in~\cite[Lemma 3.8]{Rudolph26}, which was originally applied to an exact block encoding of a local Hamiltonian (obtained via its Pauli decomposition). In our construction $V'_x$, we instead use an exact block encoding of $K$.

\paragraph{Analysis of $V'_x$.}
For fixed valid labels and any quantum state $\ket{\psi}$, the exact kernel test using the exact block encoding of $K$ gives, as in~\cite[Equation~(22)]{Rudolph26},
\begin{equation}
    \label{eq:newVerifierPacc}
    \Pr{V'_x\text{ accepts }\ket{\psi}}=1-\norm*{K\ket{\psi}}^2.
\end{equation}

To see perfect completeness, the received label $z$ maximizes $\braket{z}{(I+M)^d}{z}$ and thus
\[ m=2^{hd}\braket{z}{(I+M)^d}{z}=\braket{z}{(2^hI+2^hM)^d}{z}\in\bbZ
 \quad\text{and}\quad m\leq2^{(h+1)d}<2^\ell, \]
because $2^hM$ is Hermitian with Gaussian integer entries and $(I+M)^d \preceq 2^dI$. 
A direct calculation shows that the labels $(z,m)$ are valid and realize $\beta=\frac{1}{\braket{z}{(I+M)^d}{z}}$, and the received state $\ket{\psi}=\ket{\psi_z}$ satisfies $K\ket{\psi}=0$. Therefore, $V'_x$ accepts with certainty for \emph{yes} instances.

\vspace{1em}
For \emph{no} instances, invalid labels are rejected, and by convexity it suffices to consider each quantum state $\ket{\psi}$ conditioned on the corresponding measured labels. By \Cref{eq:newVerifierPacc}, it remains to lower-bound $\norm*{K\ket{\psi}}$ for valid labels. 
A direct calculation gives
\[ \norm*{K\ket{\psi}}
    \geq \alpha\norm*{(I-M)^d\ket{\psi}}
       -\alpha\beta\norm*{(I-M^2)^d\ketbra{z}{z}\ket{\psi}} 
    \geq \alpha\rbra[\Big]{1-\frac{1}{2d}}^d-\alpha\beta
    \geq\alpha\rbra[\Big]{\frac{1}{2}-\beta}
    \geq\frac{1}{16}.
\]
Here, the last inequality uses $\alpha \geq 1/4$ and $\beta \leq 1/4$. We thus conclude that $V'_x$ accepts with probability at most $1-1/256$.

\subsection{Discussion and open problems}
\label{subsec:discussion-open-problem}

The most intriguing open problem is perhaps whether $\QMAtwo=\QMAtwo_1$:
\begin{enumerate}[label={\upshape(\alph*)}]
    \item\label{question:QMAtwo} Does \QMAtwo{} achieve perfect completeness? 
\end{enumerate}

While the exact kernel test already applies to product witnesses~\cite[Theorem 1.9]{Rudolph26}, an obstacle is that our techniques using polynomial transformations exploit the \emph{eigenvalues} of the acceptance operator $M$ rather than the \emph{separable values}, namely the maximum expectation of $M$ over product states. This limitation prevents our technique from establishing a soundness bound on the separable values, suggesting that resolving Question \ref{question:QMAtwo} requires new ideas. 

\vspace{1em}
Moreover, our transformations for \QIPtwo{} and \qqQAM{} apply to two-message \emph{space-bounded} quantum interactive proof systems introduced in~\cite{LGLNW25}, such as $\QIPLtwo$ and its variants, regardless of whether the verifier performs intermediate measurements. Nevertheless, our techniques do not immediately extend to \QMAL{}~\cite{FR21}, which motivates the following question:
\begin{enumerate}[label={\upshape(\alph*)}]
    \setcounter{enumi}{1}
    \item\label{question:QMAL} Does \QMAL{} achieve perfect completeness? 
\end{enumerate}

While both the original verifier and the witness use $O(\log{n})$ qubits, the classical label length $\ell=(h+1)d+1$ can still be \emph{polynomial} in $n$, since the verifier may contain \emph{polynomially} many Hadamard gates. Therefore, resolving Question \ref{question:QMAL} would clarify whether our clock-free construction can preserve the original verifier's $O(\log n)$-qubit space. 

\paragraph{Concurrent work.} While preparing this manuscript, we became aware of concurrent work that establishes that \QMA{} achieves perfect completeness~\cite{GR26}, which was released on arXiv one day before our work.
The $\QMA_1$ proof systems in~\cite{GR26} and in our first version~\cite{LV26} are built from \emph{the same unnormalized history vector}. The mechanisms for achieving perfect completeness are nevertheless \emph{different}: (i) in the construction of~\cite{GR26}, the endpoint of the history vector is connected back to its starting point; (ii) the construction of~\cite{LV26} uses the total squared norm of the history vector to determine the rank-one term in a perturbation of $BB^\dagger$, so that the resulting matrix has an exact kernel. The resulting \emph{honest witnesses} and soundness analyses are also \emph{different}.

In addition, our new $\QMA_1$ proof system in \Cref{sec:qma} further simplifies the construction in our first version~\cite{LV26} by removing the \emph{clock register} from the witness state, a register that also appears in~\cite{GR26}. The construction also serves a purpose similar to that of the shifted-OR argument used to achieve perfect completeness for \MA{}~\cite{Lautemann83,ZF87}.


\section{Preliminaries}
\label{sec:preliminaries}

We assume that the reader has basic familiarity with quantum information and computation, and the textbook~\cite{NC10} serves as a good starting point. For a more comprehensive survey of quantum interactive proof systems, we refer the reader to~\cite{VW16}. 

Throughout the paper, all logarithms are base-$2$, and we write $\bbZ[i] \coloneqq \cbra{a+bi \colon a,b\in\bbZ}$ for the Gaussian integers. 
The notation $\norm{\ket{\psi}}$ denotes the Euclidean norm, while $\norm{A}$ denotes the operator norm of a matrix $A$, so that $\norm{A}=\sigma_{\max}(A)$, where $\sigma_{\max}(A)$ denotes the largest singular value of $A$. We set $\abs{A}=(A^\dagger A)^{1/2}$ and write $A\succeq B$ when $A-B$ is positive semidefinite. If $A\succeq 0$, its operator norm coincides with its largest eigenvalue. We also adopt the following conventions: quantum registers are denoted by capital sans-serif letters, such as $\sfM$, and $\ket{\bar{0}}$ denotes $\ket{0}^{\otimes a}$ for $a>1$. 

\subsection{Proof system and gateset conventions}
\label{subsec:proof-system-conventions}
We call $\calI = (\calI_{\yes}, \calI_{\no})$ is a \textit{promise problem}, if it satisfies that $\calI_{\yes} \cap \calI_{\no} =\emptyset$ and $\calI_{\yes} \cup \calI_{\no} \subseteq \binset^*$. For simplicity, we write $x\in\calI$ for $x\in \calI_{\yes} \cup \calI_{\no}$. 

Given a promise problem $\calI$, we write $(\protocol{P}{V})(x)$ for an $m(|x|)$-message quantum proof system on input $x\in \calI$ , where  $m=2\ell$ is \emph{even}, the verifier's actions are $V(x)_1,\cdots,V(x)_{\ell+1}$, and the prover's actions are $P(x)_1,\cdots,P(x)_\ell$. The verifier and prover alternately apply their actions, with the last message received by the verifier. For convenience, we write $V_j \coloneqq V(x)_j$ and $P_j \coloneqq P(x)_j$. When a transformation maps a proof system $\protocol{P}{V}$ to a new proof system $\protocol{P'}{V'}$, we may omit the input $x$ and the underlying promise problem $\calI$.

\paragraph{Gateset conventions.}
For the one-message setting (\QMA{}) and the classical-public-coin two-message setting (\QAM{}), one-sided errors are achieved over the \emph{fixed exact gateset}:\footnote{Equivalent partial characterizations that slightly relax this restriction are given in~\cite{Rudolph26} by defining $\QMA_1$ over fixed \emph{number fields} $\bbF$, allowing a number-field-dependent universal gateset and other gates implementable over $\bbF$.}
\[ \calG\coloneqq\cbra{\Had,X,\Toffoli,S}, \quad\text{where } S\coloneqq\diag(1,i).\] 

Given a verification circuit $V_x$, where $x\in\calI$ for $\calI \in \QMA^\calG$, the corresponding acceptance operator $M \coloneqq \bra{\bar{0}} V_x^\dagger \ketbra{1}{1}_\Out V_x \ket{\bar{0}}$ is a $2^w\times 2^w$ positive semidefinite matrix, where $w$ denotes the number of qubits in the message (i.e., witness) register $\sfM$. 
This gateset-dependent definition of $\QMA^\calG$ has the following properties:
\begin{enumerate}[label={\upshape(\arabic*)}]
    \item Parallel error reduction (e.g., ~\cite[Theorem~3.2]{MW05}) is exactly implementable for $\QMA^\calG$, since the required Boolean operations are exact over $\calG$: a \CNOT{} can be implemented by a \Toffoli{} with a temporary $\ket{1}$ control. 
    \item The inverse of the verification circuit, $V_x^\dagger$, is exactly implementable, since the gateset $\calG$ is closed under conjugation (and thus inverse) as $S^\dagger=S^3$.
    \item Each entry of $M$ belongs to $2^{-h} \bbZ[i]$, and thus has polynomial-length bit precision, where $h(n)$ denotes the number of Hadamard gates in $V_x$. This is because each matrix entry of $V_x$ belongs to $2^{-h/2}\bbZ[i]$, while the other gates have Gaussian-integer entries.
\end{enumerate}

\subsection{Elementary transformations for quantum interactive proof systems}
\label{subsec:qip-preliminaries}

We begin with useful information-theoretic lemmas and then present two elementary transformations from the literature. We write $\ket{\Phi^+}= (\ket{00}+\ket{11})/\sqrt{2}$ for an EPR pair. A key information-theoretic quantity is the \emph{(Uhlmann) fidelity} $\F(\rho_0,\rho_1) \coloneqq \Tr\abs{\sqrt{\rho_0}\sqrt{\rho_1}}$, and we need Uhlmann's theorem and the following additional properties:

\begin{lemma}[Uhlmann's theorem~\cite{Uhlmann76}]
    \label{lemma:uhlmann}
    Let $\rho_0$ and $\rho_1$ be quantum states on register $\sfA$.
    Fix a purification $\ket{\psi_0}_{\sfA\sfR}$ of $\rho_0$ and a purification $\ket{\psi'_1}_{\sfA\sfR}$ of $\rho_1$. Then, the squared fidelity satisfies
    \[ \F^2(\rho_0,\rho_1) 
    = \max_{\ket{\psi_1}}\abs*{\innerprod{\psi_0}{\psi_1}}^2 =\max_{U_\sfR}\abs*{\bra{\psi_0} \rbra*{I_\sfA\otimes U_\sfR}\ket{\psi'_1}}^2. \]
    Here, the first maximum ranges over all purifications $\ket{\psi_1}$ of $\rho_1$ on $(\sfA,\sfR)$, while the second ranges over all unitaries $U_\sfR$ on $\sfR$. 
    The second equality implies the \emph{unitary equivalence of purifications}: 
    \[ \Tr_{\sfR}(\ketbra{\psi_1}{\psi_1}) = \rho_1 = \Tr_{\sfR}(\ketbra{\psi'_1}{\psi'_1}), \quad\text{where } \ket{\psi_1} = (I_\sfA\otimes U_\sfR) \ket{\psi'_1} \text{ for some unitary } U_\sfR. \] 
\end{lemma}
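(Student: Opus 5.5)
The plan is to prove the two equalities in turn: first the variational formula for $\F^2$ by an operator-norm (trace-norm) duality argument, and then the unitary equivalence of purifications as its equality case.

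\emph{Reduction to a fixed linear map.} Fix an orthonormal basis $\{\ket{i}_\sfR\}$ and identify each vector $\ket{\psi}_{\sfA\sfR}$ with the linear operator $X_\psi\colon \sfR\to\sfA$, defined by $\ket{\psi}=\sum_i X_\psi\ket{i}\otimes\ket{i}$. Then $\Tr_\sfR\ketbra{\psi}{\psi}=X_\psi X_\psi^\dagger$, and $\innerprod{\psi}{\phi}=\Tr(X_\psi^\dagger X_\phi)$. Any $X$ with $XX^\dagger=\rho$ has polar form $X=\sqrt{\rho}\,W$ for some partial isometry $W$, which extends to a unitary as long as $\dim\sfR\ge \operatorname{rank}\rho$; this is implicit because purifications on $\sfR$ exist. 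Consequently every purification of $\rho_j$ on $(\sfA,\sfR)$ has the form $(I\otimes U_\sfR)\ket{\psi'_j}$, because $(I\otimes U)$ acts on $X$ as right multiplication by $U^{T}$.

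\emph{Computing the maximum.} With $X_0=\sqrt{\rho_0}W_0$ and a purification $\ket{\psi_1}$ of $\rho_1$ written as $X_1=\sqrt{\rho_1}W_1$, we get
\[
\innerprod{\psi_0}{\psi_1}=\Tr\bigl(W_0^\dagger\sqrt{\rho_0}\sqrt{\rho_1}W_1\bigr)=\Tr\bigl(\sqrt{\rho_0}\sqrt{\rho_1}\,W_1W_0^\dagger\bigr).
\]
As $W_1$ ranges over unitaries, $V\coloneqq W_1W_0^\dagger$ ranges over all unitaries. The duality $\max_V\abs{\Tr(AV)}=\Tr\abs{A}$ then applies: the upper bound follows from the polar decomposition of $A$ combined with Cauchy--Schwarz (or von Neumann's trace inequality), and the maximum is attained at $V=U_A^\dagger$, where $A=U_A\abs{A}$. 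This gives $\max\abs{\innerprod{\psi_0}{\psi_1}}=\Tr\abs{\sqrt{\rho_0}\sqrt{\rho_1}}=\F(\rho_0,\rho_1)$. Squaring yields the first equality. The second equality follows because the set of purifications of $\rho_1$ coincides with $\{(I\otimes U_\sfR)\ket{\psi'_1}\}$, as established in the first step.

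\emph{Unitary equivalence.} The displayed statement is exactly the polar-decomposition argument above. If $X_{\psi_1}X_{\psi_1}^\dagger=X_{\psi'_1}X_{\psi'_1}^\dagger=\rho_1$, then $X_{\psi_1}=X_{\psi'_1}V$ for some unitary $V$ on $\sfR$. Translating $V$ back (taking its transpose) gives the required $U_\sfR$.

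The main obstacle I expect is the bookkeeping around partial isometries when $\rho$ is rank-deficient, i.e. ensuring each partial isometry can be extended to a full unitary on $\sfR$. I would handle this by extending on the orthogonal complements of the support and range, whose dimensions match because $\sfR$ is finite-dimensional.
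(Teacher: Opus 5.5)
The paper gives no proof of this lemma; it only cites Uhlmann. Your route (polar decomposition plus $\max_V\abs{\Tr(AV)}=\Tr\abs{A}$) is the standard textbook argument. It is complete when $\dim\sfR=\dim\sfA$, which is the setting of Nielsen--Chuang. The lemma, however, allows an arbitrary reference $\sfR$ subject only to $\operatorname{rank}\rho_j\le\dim\sfR$, and there your bookkeeping breaks exactly where you anticipated trouble.

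Your polar factor $W$ is a partial isometry from $\sfR$ to $\sfA$. With $r=\operatorname{rank}\rho$, the complement of its initial space in $\sfR$ has dimension $\dim\sfR-r$, while the complement of its final space in $\sfA$ has dimension $\dim\sfA-r$. So $W$ extends to a unitary only if $\dim\sfR=\dim\sfA$; finite-dimensionality does not make these numbers agree. Consequently the step ``as $W_1$ ranges over unitaries, $V=W_1W_0^\dagger$ ranges over all unitaries'' fails. For $\dim\sfR\ge\dim\sfA$ you could rescue that sentence by extending $W_0,W_1$ to co-isometries ($WW^\dagger=I_\sfA$). For $\dim\sfR<\dim\sfA$ (e.g.\ $\sfA$ two qubits, $\sfR$ one qubit, $\rho_0,\rho_1$ of rank at most $2$), the operator $W_1W_0^\dagger$ has rank at most $\dim\sfR$ and is never unitary on $\sfA$. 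So attainment at $V=U_A^\dagger$ is not justified. Your upper bound does survive, since $\norm{W_1W_0^\dagger}\le 1$ and $\abs{\Tr(AC)}\le\norm{C}\Tr\abs{A}$ for every $C$.

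Both halves are repaired by keeping every extension inside $\sfR$. Write $X_{\psi_0}=\sqrt{\rho_0}W_0$, $X_{\psi_1}=\sqrt{\rho_1}W_1$ and $X_{\psi'_1}=\sqrt{\rho_1}W'_1$, each $W$ a partial isometry from $\ker(X)^\perp$ onto the relevant support. Let $P_j$ denote the projector onto $\operatorname{supp}\rho_j$.

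\emph{Unitary equivalence.} The map $(W'_1)^\dagger W_1$ sends $\ker(X_{\psi_1})^\perp$ isometrically onto $\ker(X_{\psi'_1})^\perp$, two $r$-dimensional subspaces of the same space $\sfR$. It therefore extends to a unitary $V$ on $\sfR$ sending $\ker X_{\psi_1}$ onto $\ker X_{\psi'_1}$. One checks $W'_1V=W_1$, hence $X_{\psi'_1}V=X_{\psi_1}$, and one takes $U_\sfR=V^{T}$.

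\emph{Attainment.} Set $C^{T}\coloneqq (W'_1)^\dagger U_A^\dagger W_0$, a contraction on $\sfR$. Using $W'_1(W'_1)^\dagger=P_1$ and $W_0W_0^\dagger=P_0$, you get $\bra{\psi_0}(I\otimes C)\ket{\psi'_1}=\Tr(\sqrt{\rho_0}\sqrt{\rho_1}U_A^\dagger)=\Tr\abs{\sqrt{\rho_0}\sqrt{\rho_1}}$. Every contraction is the average of two unitaries: in an SVD $C=Y\Sigma Z^\dagger$, write each singular value $\sigma$ as the average of $\sigma\pm i\sqrt{1-\sigma^2}$. The triangle inequality together with your upper bound then forces each of these two unitaries to achieve $\F(\rho_0,\rho_1)$. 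This yields the second equality, and the first follows from unitary equivalence exactly as you planned.
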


\begin{lemma}[Data-processing inequality for the fidelity, adapted from~{\cite[Theorem 9.6]{NC10}}]
    \label{lemma:fidelity-data-processing}
    Let $\rho_0$ and $\rho_1$ be quantum states of the same dimension. For any quantum channel $\calN$, 
    \[ \F\rbra*{\calN(\rho_0), \calN(\rho_1)} \geq \F(\rho_0,\rho_1). \]
\end{lemma}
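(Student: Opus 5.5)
The statement is the data-processing (monotonicity) inequality for fidelity. I will derive it from Uhlmann's theorem (\Cref{lemma:uhlmann}) together with the Stinespring representation of $\calN$.

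\emph{Step 1: dilate the channel.} Write $\calN(\rho)=\Tr_{\sfE}\rbra*{U(\rho\otimes\ketbra{0}{0}_{\sfE'})U^\dagger}$ for some unitary $U$ acting on the input register together with an environment ancilla, whose output splits into the output register $\sfB$ and an environment register $\sfE$.

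\emph{Step 2: choose optimal purifications.} By the first equality in \Cref{lemma:uhlmann}, fix purifications $\ket{\psi_0}_{\sfA\sfR}$ of $\rho_0$ and $\ket{\psi_1}_{\sfA\sfR}$ of $\rho_1$ such that $\abs{\innerprod{\psi_0}{\psi_1}}=\F(\rho_0,\rho_1)$.

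\emph{Step 3: push the purifications through the dilation.} Define
\[ \ket{\phi_b}\coloneqq (U\otimes I_\sfR)\rbra*{\ket{\psi_b}_{\sfA\sfR}\otimes\ket{0}_{\sfE'}}, \qquad b\in\cbra{0,1}. \]
Tracing out $(\sfE,\sfR)$ from $\ket{\phi_b}$ gives $\calN(\rho_b)$, so $\ket{\phi_b}$ is a purification of $\calN(\rho_b)$ on $\sfB$ with purifying system $(\sfE,\sfR)$. Since $U$ is unitary and the ancilla is a fixed product state,
\[ \innerprod{\phi_0}{\phi_1}=\innerprod{\psi_0}{\psi_1}. \]

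\emph{Step 4: apply Uhlmann on the output side.} The fidelity $\F(\calN(\rho_0),\calN(\rho_1))$ is a maximum of $\abs{\innerprod{\cdot}{\cdot}}$ over purifications. Fixing the purification $\ket{\phi_0}$ and maximizing over purifications of the second state, \Cref{lemma:uhlmann} gives
\[ \F\rbra*{\calN(\rho_0),\calN(\rho_1)} \geq \abs{\innerprod{\phi_0}{\phi_1}}=\F(\rho_0,\rho_1). \]

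\emph{Main obstacle: dimension bookkeeping.} Uhlmann's theorem requires both purifications to live on the same register. In Step 3 both $\ket{\phi_0}$ and $\ket{\phi_1}$ live on $(\sfB,\sfE,\sfR)$, so this holds. Since both $\rho_0$ and $\rho_1$ go through the same $U$, the purifying spaces agree, and the inequality follows. An alternative route is to use the measurement characterization of fidelity, but the Stinespring argument is cleaner.
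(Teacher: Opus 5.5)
Your argument is correct, and it is the standard Stinespring-plus-Uhlmann proof of monotonicity from the cited source (Nielsen--Chuang, Theorem 9.6); the paper invokes that result without reproducing a proof. Step 4 only needs the easy direction of Uhlmann's theorem, namely that any two purifications on a common register have overlap at most the fidelity, so the dimension bookkeeping you flag is not really an obstacle.
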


\begin{lemma}[{\cite[Lemma 2]{SR01} \&~\cite[Lemma 3.3]{NS03}}]
    \label{lemma:sum-of-squared-fidelity}
    Let $\rho_0$ and $\rho_1$ be quantum states of the same dimension. Then, for any quantum state $\xi$ of the same dimension, 
    \[\F^2(\rho_0,\xi) + \F^2(\xi,\rho_1) \leq 1 + \F(\rho_0,\rho_1).\]
\end{lemma}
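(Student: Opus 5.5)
The plan is to reduce the inequality to a statement about three unit vectors by passing to purifications, using \Cref{lemma:uhlmann}, and then to bound a sum of two rank-one projections by its largest eigenvalue.

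First, fix a reference register $\sfR$ large enough to purify all three states, and fix a purification $\ket{\xi}_{\sfA\sfR}$ of $\xi$. By the first equality in \Cref{lemma:uhlmann}, applied to the pair $(\xi,\rho_0)$ with $\ket{\xi}$ held fixed, there is a purification $\ket{\psi_0}$ of $\rho_0$ such that $\F(\rho_0,\xi)=\abs{\innerprod{\xi}{\psi_0}}$. By the same argument applied to $(\xi,\rho_1)$, there is a purification $\ket{\psi_1}$ of $\rho_1$ such that $\F(\xi,\rho_1)=\abs{\innerprod{\xi}{\psi_1}}$. Since $\ket{\psi_0}$ and $\ket{\psi_1}$ are purifications of $\rho_0$ and $\rho_1$ on the same registers, \Cref{lemma:uhlmann} (the maximization over purifications of $\rho_1$ with $\ket{\psi_0}$ fixed) also gives $\abs{\innerprod{\psi_0}{\psi_1}}\leq \F(\rho_0,\rho_1)$.

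It therefore suffices to prove the following purely linear-algebraic claim: for unit vectors $u,v,\xi$,
\[ \abs{\innerprod{\xi}{u}}^2+\abs{\innerprod{\xi}{v}}^2 \leq 1+\abs{\innerprod{u}{v}}. \]
The left-hand side equals $\bra{\xi}\rbra*{\ketbra{u}{u}+\ketbra{v}{v}}\ket{\xi}$. This is at most the largest eigenvalue of the positive semidefinite operator $A\coloneqq\ketbra{u}{u}+\ketbra{v}{v}$. The operator $A$ has rank at most two and is supported on $\spanset\{u,v\}$. Its nonzero eigenvalues coincide with those of the $2\times 2$ Gram matrix $\begin{pmatrix}1 & \innerprod{u}{v}\\ \innerprod{v}{u} & 1\end{pmatrix}$, since $A=BB^\dagger$ with $B=[u\; v]$ and the Gram matrix is $B^\dagger B$. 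The eigenvalues of this Gram matrix are $1\pm\abs{\innerprod{u}{v}}$. Hence $\lambda_{\max}(A)=1+\abs{\innerprod{u}{v}}$.

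Applying the claim with $u=\ket{\psi_0}$ and $v=\ket{\psi_1}$, and then using $\abs{\innerprod{\psi_0}{\psi_1}}\leq\F(\rho_0,\rho_1)$, yields the lemma. The only step that needs care is the purification step: I must make sure that both optimal purifications are chosen against the \emph{same} fixed $\ket{\xi}$ on a common reference register. This holds because Uhlmann's maximum is attained for any fixed purification of the other state, so no real obstacle is expected.
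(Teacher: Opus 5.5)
Your proof is correct and complete. You fix one purification $\ket{\xi}$ on a reference register large enough that the maxima in \Cref{lemma:uhlmann} are attained, choose both optimal purifications $\ket{\psi_0},\ket{\psi_1}$ against it, bound $\abs{\innerprod{\psi_0}{\psi_1}}\leq\F(\rho_0,\rho_1)$, and read off $\lambda_{\max}(\ketbra{u}{u}+\ketbra{v}{v})=1+\abs{\innerprod{u}{v}}$ from the $2\times 2$ Gram matrix; this is the standard purification argument.

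The paper gives no proof of its own and only cites \cite{SR01,NS03}. Another common route fixes a measurement attaining $\F(\rho_0,\rho_1)$ via the Fuchs--Caves characterization, applies \Cref{lemma:fidelity-data-processing} to the other two fidelities, and proves the same three-vector inequality for the entrywise square roots of the outcome distributions. That route needs a characterization the paper never states, whereas yours uses only \Cref{lemma:uhlmann}.
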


The following corollary of \Cref{lemma:uhlmann,lemma:fidelity-data-processing} is also useful:

\begin{corollary}[Fidelity extension identity, adapted from~{\cite[Theorem~3.28]{Watrous18}}]
    \label{corr:fidelity-extension-identity}
    Let $\rho_0$ be a quantum state on registers $(\sfA,\sfB)$ and let $\tau$ be a quantum state on register $\sfA$. Then,
    \[ \F^2\rbra*{\Tr_{\sfB}(\rho_0),\tau} = \max_{\rho_1:\,\Tr_{\sfB}(\rho_1)=\tau}\F^2(\rho_0,\rho_1). \]
    Here, the maximum ranges over quantum states $\rho_1$ on registers $(\sfA,\sfB)$. 
\end{corollary}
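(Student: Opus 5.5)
We prove both inequalities in \Cref{corr:fidelity-extension-identity}, writing $\tau_0 \coloneqq \Tr_{\sfB}(\rho_0)$.

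\emph{Upper bound ($\geq$ direction on the left).} For any state $\rho_1$ on $(\sfA,\sfB)$ with $\Tr_{\sfB}(\rho_1)=\tau$, apply \Cref{lemma:fidelity-data-processing} to the partial-trace channel $\Tr_{\sfB}$. This gives $\F(\tau_0,\tau)\geq \F(\rho_0,\rho_1)$. Squaring and taking the maximum over all such $\rho_1$ shows the right-hand side is at most the left-hand side.

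\emph{Achievability ($\leq$ direction).} We exhibit a $\rho_1$ attaining equality, using \Cref{lemma:uhlmann}.
\begin{enumerate}
\item Fix a purification $\ket{\psi_0}_{\sfA\sfB\sfR}$ of $\rho_0$ with a reference register $\sfR$. Viewed as a state on $\sfA$ versus $(\sfB,\sfR)$, it is also a purification of $\tau_0$.
\item Choose $\sfR$ large enough that $(\sfB,\sfR)$ can also hold a purification of $\tau$; enlarging $\sfR$ by a tensor factor in a fixed state costs nothing.
\item By the first equality in \Cref{lemma:uhlmann}, applied with purifying system $(\sfB,\sfR)$, there is a purification $\ket{\psi_1}_{\sfA\sfB\sfR}$ of $\tau$ with $\abs{\innerprod{\psi_0}{\psi_1}}^2=\F^2(\tau_0,\tau)$.
\item Set $\rho_1\coloneqq\Tr_{\sfR}(\ketbra{\psi_1}{\psi_1})$. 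Then $\Tr_{\sfB}(\rho_1)=\tau$, so $\rho_1$ is feasible.
\item Note that $\ket{\psi_0}$ and $\ket{\psi_1}$ are purifications, with purifying system $\sfR$, of $\rho_0$ and $\rho_1$ respectively. Uhlmann's maximum is at least this particular overlap, so
\[ \F^2(\rho_0,\rho_1)\geq\abs{\innerprod{\psi_0}{\psi_1}}^2=\F^2(\tau_0,\tau). \]
\end{enumerate}
This proves the reverse inequality, so the maximum is attained and equals $\F^2(\tau_0,\tau)$.

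\emph{Main subtlety.} The only care needed is the dimension of $\sfR$: the same $\sfR$ must let $(\sfB,\sfR)$ purify both $\tau_0$ and $\tau$, so that Uhlmann's theorem applies on the common space. Taking $\dim\sfR\geq\dim\sfA\cdot\dim\sfB$ suffices. Compactness of the set of states ensures the maximum exists, although the construction above already attains it explicitly.
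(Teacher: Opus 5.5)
Your proposal is correct and takes the same route the paper indicates: it presents this identity as a direct corollary of \Cref{lemma:fidelity-data-processing} applied to $\Tr_{\sfB}$ (for one direction) and \Cref{lemma:uhlmann} (for attainment), where a purification of $\rho_0$ is reused as a purification of $\Tr_{\sfB}(\rho_0)$ with purifying system $(\sfB,\sfR)$ and reduced back to $(\sfA,\sfB)$, as in the cited Watrous argument. Your choice $\dim\sfR\geq\dim\sfA\cdot\dim\sfB$ correctly handles the one technical point, namely that both applications of Uhlmann's theorem take place on a common, sufficiently large reference space.
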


\subsubsection{Single-qubit exact-half calibration from perfect rewindability}
\label{subsec:calibration}

The notion of \emph{perfectly rewindable} quantum interactive proof systems was originally introduced in~\cite[Section 3]{KKMV09} for the multi-prover setting, and later specialized to the single-prover setting in~\cite[Section 7.2]{KLGN15}. This transformation \emph{calibrates} the acceptance probability to $1/2$ for a specific prover for \emph{yes} instances in the resulting proof system, while the prover adds only \emph{one new qubit} to the last message: 

\begin{lemma}[Single-qubit exact-half calibration]
\label{lemma:exact-half-calibration}
    Let $\protocol{P}{V}$ be an $m$-message quantum proof system with completeness at least $2/3$ and soundness at most $1/3$, where $m$ is even. There exists an explicit $m$-message quantum proof system $\protocol{P'}{V'}$ with completeness exactly $1/2$ and soundness at most $1/3$. 
\end{lemma}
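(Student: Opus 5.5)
The construction is the one sketched in the proof overview. We keep the first $m-1$ messages of $\protocol{P}{V}$ unchanged and modify only the last prover message and the final measurement. Concretely, $V'$ runs $V_1,\dots,V_{\ell+1}$ exactly as $V$ does, but without the final measurement. It receives one extra qubit $\sfC$ together with the prover's last message. It then measures the original output qubit and $\sfC$ in the computational basis, and accepts iff both outcomes are $1$. The message count stays $m$, since the extra qubit rides along with the prover's last message. The verifier is clearly efficient and explicit.

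\emph{Completeness.} Let $P$ be an honest prover achieving acceptance probability $p\geq 2/3$ on a \emph{yes} instance $x$. The prover $P'$ follows $P$ in all rounds. In the last round it also prepares $\sfC$ in the unentangled mixed state
\[ \rbra*{1-\tfrac{1}{2p}}\ketbra{0}{0}+\tfrac{1}{2p}\ketbra{1}{1}. \]
This is a valid state because $1/(2p)\le 3/4\le 1$. Since $\sfC$ is in product with everything else, the two measurement outcomes are independent. The acceptance probability is therefore exactly $p\cdot\tfrac{1}{2p}=1/2$.

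The definition asks for completeness \emph{exactly} $1/2$, meaning the supremum over provers is $1/2$. This would fail if some other prover could exceed $1/2$. I would therefore interpret completeness for this lemma as "there exists a prover accepting with probability exactly $1/2$," as the later EPR-completion step uses. If the maximum must equal $1/2$ literally, I would fix it as follows. The prover prepares $\sfC$ using the optimal value $p^\ast=\max_P \Pr{\text{accept}}$, which is attained by compactness of the strategy space, and $p^\ast\ge 2/3$. The verifier, however, cannot enforce an upper bound. So the honest-prover reading is the one to adopt, and I expect this definitional point to be the only subtle step.

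\emph{Soundness.} Fix any prover $P'$ on a \emph{no} instance. The joint operator of the two measurements is $\Pi=\ketbra{1}{1}_{\Out}\otimes\ketbra{1}{1}_\sfC\preceq \ketbra{1}{1}_{\Out}\otimes I_\sfC$. Now discard $\sfC$: the prover $P'$ with $\sfC$ treated as part of its private workspace is a valid prover $P$ for the original system. Under $P$, the acceptance probability equals $\Tr\rbra*{(\ketbra{1}{1}_{\Out}\otimes I_\sfC)\rho}$, where $\rho$ is the final joint state. Hence
\[ \Pr{V' \text{ accepts}} = \Tr(\Pi\rho) \le \Tr\rbra*{(\ketbra{1}{1}_{\Out}\otimes I_\sfC)\rho} \le 1/3. \]
This holds even if $\sfC$ is entangled with the rest of the state.
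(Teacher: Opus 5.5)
Your proposal is correct and follows the paper's proof: the prover appends one qubit that reads $1$ with probability $1/(2p)$, the verifier additionally requires this qubit to read $1$, and soundness follows from $\ketbra{1}{1}\otimes E\preceq I\otimes E$ once the extra qubit is absorbed into a legal prover for the original system. Your reading of ``completeness exactly $1/2$'' as the existence of a prover accepting with probability exactly $1/2$ is also the paper's reading, and it is what the subsequent EPR-pair completion step requires.
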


\begin{proof}
    The only difference between the resulting proof system $\protocol{P'}{V'}$ and the original proof system $\protocol{P}{V}$ is that $P'$ additionally \emph{sends one qubit} in the register $\sfA$ with the last message, and $V'$ accepts only if both $\sfA$ is measured to be $1$ and $V$ accepts. We then complete the analysis:
    \begin{itemize}
        \item For \emph{yes} instances, there exists a prover $P$ such that $p \coloneq \Pr{\protocol{P}{V} \text{ accepts}} \geq 2/3$. The new prover $P'$ prepares a single-qubit state $\rho_\sfA$ such that $\Tr\rbra*{\rho_\sfA\ketbra{1}{1}_\sfA} = 1/(2p)$, giving 
        \begin{align*}
            \Pr{\protocol{P'}{V'} \text{ accepts}} &= \Pr{\rho_\sfA \text{ is measured to be } 1} \cdot \Pr{\protocol{P}{V} \text{ accepts}}\\
            &= \Tr(\rho_\sfA \ketbra{1}{1}_\sfA) \cdot p = \frac{1}{2p} \cdot p = \frac{1}{2}.
        \end{align*}
        Here, the choice of $p$ is \emph{prover-dependent} and unknown to the verifier $V'$. 
        \item For \emph{no} instances, let $\widehat{\rho}_\sfM$ be the joint state returned by an arbitrary prover, which may be entangled across $\sfA$ and the remaining qubits in $\sfM$. Let $E$ be the final acceptance effect of $V$ on the message and the verifier's private registers, with $0 \preceq E \preceq I$. Then,
        \begin{align*}
            \Pr{\protocol{P'}{V'} \text{ accepts}} &= \Tr\rbra*{ \rbra*{\ketbra{1}{1}_\sfA \otimes E} \widehat{\rho}_\sfM }\\
            &\leq \Tr\rbra*{ \rbra*{I_\sfA \otimes E} \widehat{\rho}_\sfM }\\
            &= \Tr\rbra*{E \ \Tr_\sfA\rbra*{ \widehat{\rho}_\sfM }} 
            = \Pr{\protocol{P}{V} \text{ accepts}} \leq \frac{1}{3}. 
        \end{align*}
        Here, the second line follows from $\ketbra{1}{1}_{\sfA}\otimes E\preceq I_{\sfA}\otimes E$, while tracing out $\sfA$ in the last line turns the final action of $P'$ into a legal action for the original proof system.  \qedhere
    \end{itemize}
\end{proof}

\subsubsection{EPR-pair completion and an efficiently preparable terminal state}
\label{subsec:epr-completion}

We next specialize the technique for achieving perfect completeness in~\cite[Section 3]{KW00} to the calibrated proof system with completeness exactly $1/2$:

\begin{lemma}[EPR-pair completion]
    \label{lemma:EPR-completion}
    Let $\protocol{P}{V}$ be an $m$-message quantum proof system with completeness exactly $1/2$ and soundness at most $s < 1/2$, where $m$ is even. There exists an explicit $(m+2)$-message quantum proof system $\protocol{P'}{V'}$ with \emph{perfect} completeness and soundness at most  $g_\EPR(s)\coloneqq\frac{1}{2}+\sqrt{s(1-s)}$.
\end{lemma}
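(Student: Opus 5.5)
The plan follows the Kitaev--Watrous construction, specialized to completeness exactly $1/2$.

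\emph{Construction.} The new verifier $V'$ runs $V_1,\dots,V_{\ell+1}$ of $\protocol{P}{V}$ as before, but does not measure at the end. Instead, it coherently copies the decision qubit into a fresh register $\sfZ$ using a \CNOT{}, which can be built from a \Toffoli{} with a $\ket{1}$ control. Let $\sfR$ denote all remaining registers, namely the last message register and $V$'s private registers. $V'$ sends $\sfR$ to the prover and receives a single-qubit register $\sfZ'$. It then measures $(\sfZ,\sfZ')$ in a basis containing $\ket{\Phi^+}$ and accepts iff the outcome is $\ket{\Phi^+}$. This adds exactly two messages, giving $m+2$ in total.

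\emph{Completeness.} Fix the honest prover, and let $\ket{\Psi}$ be the global pure state (including the prover's private space $\sfP$) after the copy. Since acceptance occurs with probability exactly $1/2$, the reduced state on $\sfZ$ is diagonal (because of the copy) and equals $I_\sfZ/2$. Thus $\ket{\Psi}$ is a purification of $I_\sfZ/2$ on $\sfZ\otimes(\sfR\sfP)$. Now $\ket{\Phi^+}_{\sfZ\sfZ'}\otimes\ket{0\cdots0}$ is another purification of the same state. By the unitary equivalence of purifications in \Cref{lemma:uhlmann}, there is a unitary on $(\sfR,\sfP)$, padded with ancillas as needed, that maps $\ket{\Psi}$ to a state whose $(\sfZ,\sfZ')$ part is exactly $\ket{\Phi^+}$. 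The prover applies this unitary and returns $\sfZ'$, so $V'$ accepts with certainty.

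\emph{Soundness.} Fix any prover strategy for the first $m$ messages, and let $\rho_\sfZ$ be the resulting state on $\sfZ$. It has the form $\diag(1-q,q)$ with $q\le s$, since $\rho_\sfZ$ records the acceptance probability of a legal strategy in $\protocol{P}{V}$. The final prover action is a channel on $(\sfR,\sfP)$ that outputs $\sfZ'$ and cannot touch $\sfZ$. Hence the state on $(\sfZ,\sfZ')$ has $\sfZ$-marginal $\rho_\sfZ$. By \Cref{corr:fidelity-extension-identity} applied to $\ketbra{\Phi^+}{\Phi^+}$, the acceptance probability $\bra{\Phi^+}\sigma\ket{\Phi^+}=\F^2(\sigma,\Phi^+)$ is at most $\F^2(\rho_\sfZ,I/2)$. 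Computing this gives
\[\F^2(\rho_\sfZ,I/2)=\tfrac12\rbra*{\sqrt{1-q}+\sqrt{q}}^2=\tfrac12+\sqrt{q(1-q)}.\]
This expression is increasing in $q$ on $[0,1/2]$. Since $q\le s<1/2$, the acceptance probability is at most $g_\EPR(s)$.

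\emph{Main obstacle.} The delicate point is justifying $q\le s$ when the prover's behavior in the earlier rounds is arbitrary. Truncating the interaction gives a legal prover for the original protocol whose acceptance probability is exactly $q$, so $q\le s$ holds. One must also make sure the \CNOT{} copy makes $\rho_\sfZ$ diagonal, which justifies the explicit fidelity formula. Everything else is a direct application of the earlier lemmas.
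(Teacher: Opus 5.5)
Your proposal is correct and follows the paper's proof essentially step for step. You record the decision coherently in $\sfZ$, hand the purifying registers to the prover, and complete $I_\sfZ/2$ to $\ket{\Phi^+}$ via unitary equivalence of purifications (\Cref{lemma:uhlmann}). You then bound soundness by $\F^2(\diag(1-q,q), I_\sfZ/2) = \frac12+\sqrt{q(1-q)}$, using monotonicity on $[0,1/2]$. The only cosmetic difference is that you invoke the fidelity extension identity (\Cref{corr:fidelity-extension-identity}) where the paper invokes data processing under the partial trace (\Cref{lemma:fidelity-data-processing}), and both give the same inequality.
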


\begin{proof}
    In the new proof system $\protocol{P'}{V'}$, $V'$ executes the original proof system $\protocol{P}{V}$ without the final measurement and coherently records the decision in a single-qubit register $\sfZ$. The new proof system $\protocol{P'}{V'}$ then adds two messages:
    \begin{enumerate}[label={\upshape(\arabic*)}]
        \item $V'$ sends all qubits in the message and its private registers that purify the state in $\sfZ$. 
        \item $V'$ receives a single-qubit register $\sfZ'$, and projects the state on $(\sfZ,\sfZ')$ onto $\ket{\Phi^+}$. 
    \end{enumerate}

    We now analyze $\protocol{P'}{V'}$:
    \begin{itemize}
        \item For \emph{yes} instances, since $V$ accepts with probability exactly $1/2$, the coherent copy in $\sfZ$ is $I_\sfZ/2$, a single-qubit maximally mixed state. Since $P'$ holds all qubits purifying $I_\sfZ$, unitary equivalence of purifications (\Cref{lemma:uhlmann}) allows $P'$ to complete this marginal to an EPR pair,\footnote{A more detailed analysis can be found in~\cite[Section 4.2.1]{VW16}, particularly Equations (4.17) through (4.20).} so $V'$ accepts with certainty. 
        \item For \emph{no} instances, since $V$ accepts with probability $p \leq s < 1/2$, the coherent copy in $\sfZ$ is $(1-p)\ketbra{0}{0}+p\ketbra{1}{1}$, and this marginal is preserved by the final prover action due to \Cref{lemma:uhlmann}. Let $\sigma_{\sfZ\sfZ'}$ be the quantum state before the final projection. By the data-processing inequality for fidelity (\Cref{lemma:fidelity-data-processing}), a direct calculation gives 
        \begin{subequations}
        \label{eq:epr-value}
        \begin{align}
            \Pr{\protocol{P'}{V'} \text{ accepts}} &= \F^2(\ketbra{\Phi^+}{\Phi^+}, \sigma_{\sfZ\sfZ'})\\
            &\leq \F^2\rbra*{ \begin{pmatrix} 1-p & 0\\ 0 & p \end{pmatrix}, \frac{I_\sfZ}{2} }
            = \frac{1}{2}+\sqrt{p(1-p)} \eqqcolon g_\EPR(p).
        \end{align} 
        \end{subequations}
        Since $g_\EPR(p)$ is monotonically increasing for $p\in[0,1/2]$ and $p \leq s$, $V'$ accepts with probability at most $g_\EPR(s)$. \qedhere
    \end{itemize}
\end{proof}

By inspecting the proof of \Cref{lemma:EPR-completion}, we notice the following key observation: 
\begin{fact}[Efficiently preparable terminal state]
    \label{fact:EPR-terminal-state}
    In the resulting proof system underlying \Cref{lemma:EPR-completion}, the terminal state on $(\sfZ,\sfZ')$ for \emph{yes} instances is exactly the EPR pair $\ket{\Phi^+}$, which is efficiently preparable. 
\end{fact}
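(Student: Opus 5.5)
The approach is to track the honest execution from the proof of \Cref{lemma:EPR-completion} up to the point just before the verifier's final projection. The goal is to show that the reduced state on $(\sfZ,\sfZ')$ is exactly $\ketbra{\Phi^+}{\Phi^+}$, and that this state has a constant-size circuit over the relevant gateset.

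\textbf{Step 1: the state before the added messages.} Consider a \emph{yes} instance. Fix the honest prover $P$ of the calibrated system from \Cref{lemma:exact-half-calibration}, so that $V$ accepts with probability exactly $1/2$. Let $\ket{\Psi}$ be the global pure state after $V'$ coherently computes the decision into $\sfZ$. Since the decision qubit is measured to be $1$ with probability exactly $1/2$, and the coherent copy is dephased relative to the remaining registers, we have $\Tr_{\setminus\sfZ}\ketbra{\Psi}{\Psi}=I_\sfZ/2$. Hence $\ket{\Psi}$ is a purification of $I_\sfZ/2$, with the purifying system $\sfR$ consisting of the message registers, the verifier's private registers, and the prover's private registers.

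\textbf{Step 2: the prover's completing unitary.} In the first added message, $V'$ hands all of $\sfR$ to the prover, except $\sfZ$. Consider the fixed purification $\ket{\Phi^+}_{\sfZ\sfZ'}\otimes\ket{0}_{\sfR'}$ of the same marginal $I_\sfZ/2$, where $\sfR'$ pads dimensions. By the unitary equivalence of purifications in \Cref{lemma:uhlmann}, there is a unitary $U$ acting only on the prover-held registers that maps $\ket{\Psi}$ to this purification. The honest $P'$ applies $U$ and returns $\sfZ'$. The joint state is then $\ket{\Phi^+}_{\sfZ\sfZ'}\otimes\ket{0}$, which is a product state, so the terminal reduced state on $(\sfZ,\sfZ')$ is exactly $\ketbra{\Phi^+}{\Phi^+}$. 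This is the same fact that yields acceptance with certainty in the proof of \Cref{lemma:EPR-completion}.

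\textbf{Step 3: efficient preparability.} The target state $\ket{\Phi^+}$ is prepared exactly by $\CNOT(\Had\otimes I)\ket{00}$. Both gates are exact in any reasonable gateset, and over $\calG$ a \CNOT{} is a \Toffoli{} with one control fixed to $\ket{1}$. The resulting circuit has constant size and does not depend on the instance or the prover.

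\textbf{Main obstacle.} The only delicate point is Step 2: the terminal state must be \emph{exactly} $\Phi^+$ and not merely close to it. This requires the calibration to give acceptance probability exactly $1/2$ rather than approximately, and it requires Uhlmann's theorem in its exact form, so that the state is pure and product across $(\sfZ,\sfZ')$ and the rest of the system. To ensure the purification dimensions match, I would pad the prover's registers with $\ket{0}$ ancillas.
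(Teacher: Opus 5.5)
Your proposal is correct and follows essentially the same route as the paper, which obtains this fact just by inspecting the \emph{yes}-case analysis in the proof of \Cref{lemma:EPR-completion}. There, exact-half calibration makes the coherent decision copy in $\sfZ$ exactly $I_\sfZ/2$, and unitary equivalence of purifications (\Cref{lemma:uhlmann}) lets the honest prover complete it exactly to $\ket{\Phi^+}_{\sfZ\sfZ'}$; your explicit Hadamard-plus-CNOT preparation circuit simply spells out the efficient-preparability part that the paper leaves implicit.
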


\subsection{Algorithmic toolkit for achieving perfect completeness}
\label{subsec:qma-preliminaries}

We begin with several useful lemmas for quantum computation with one-sided errors that appeared in the context of quantum singular value transformation (QSVT), although \emph{no} actual QSVT techniques are involved in our work. The underlying techniques were previously used in the context of $\QMA_1$~\cite{Rudolph26} and quantum logspace with one-sided errors~\cite{LGLW23}.

\begin{definition}[Exact block encoding, adapted from~{\cite[Definition~24]{GSLW19}}]
    \label{def:block-encoding}
    A unitary $U$ is an \emph{exact block encoding} of $A$ if $\rbra*{ \bra{0}^{\otimes a}\otimes I } U \rbra*{ \ket{0}^{\otimes a}\otimes I }=A$, where $a$ denotes the number of ancillary qubits initialized to $\ket{\bar{0}}$. For convenience, any normalization factor is absorbed into the encoded matrix $A$, with $\norm{A} \leq 1$. 
\end{definition}

\begin{lemma}[Product of exactly block-encoded matrices, adapted from~{\cite[Lemma~30]{GSLW19}}] 
    \label{lemma:block-product}
    Let $U_1$ and $U_2$ be exact block encodings of same-dimensional square matrices $A_1$ and $A_2$, respectively, with disjoint ancillary qubit registers $\sfR_1$ and $\sfR_2$. Then, $(I_{\sfR_2}\otimes U_1)(I_{\sfR_1}\otimes U_2)$ is an exact block encoding of $A_1A_2$.
\end{lemma}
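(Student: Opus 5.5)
The plan is to verify the block-encoding identity directly from \Cref{def:block-encoding}, tracking which ancillary register each unitary acts on. Let $\sfR_1$ (with $a_1$ qubits) and $\sfR_2$ (with $a_2$ qubits) be the ancillary registers, and let $\sfS$ be the system register on which $A_1$ and $A_2$ act. Write $\Pi_j\coloneqq \ket{\bar 0}_{\sfR_j}$ viewed as an isometry, so that the hypotheses read $(\Pi_j^\dagger\otimes I_\sfS)U_j(\Pi_j\otimes I_\sfS)=A_j$ for $j=1,2$. The combined ancilla is $(\sfR_1,\sfR_2)$, initialized to $\ket{\bar0}_{\sfR_1}\ket{\bar0}_{\sfR_2}$. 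Our goal is to show that projecting $W\coloneqq (I_{\sfR_2}\otimes U_1)(I_{\sfR_1}\otimes U_2)$ onto this state on both sides yields $A_1A_2$.

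The key step is to insert a resolution of the identity on $\sfR_2$ between the two factors. Since $I_{\sfR_2}\otimes U_1$ acts trivially on $\sfR_2$, sandwiching it between $\bra{\bar0}_{\sfR_2}$ and any $\ket{j}_{\sfR_2}$ gives $\delta_{j,0}$ times $U_1$. Therefore only the $\ket{\bar0}_{\sfR_2}$ term survives, and
\[ (\bra{\bar0}_{\sfR_1}\bra{\bar0}_{\sfR_2}\otimes I_\sfS)\,W\,(\ket{\bar0}_{\sfR_1}\ket{\bar0}_{\sfR_2}\otimes I_\sfS) = (\bra{\bar0}_{\sfR_1}\otimes I_\sfS)\,U_1\,(I_{\sfR_1}\otimes \bra{\bar0}_{\sfR_2}\otimes I_\sfS)\,(I_{\sfR_1}\otimes U_2)\,(\ket{\bar0}_{\sfR_1}\ket{\bar0}_{\sfR_2}\otimes I_\sfS). \]
Next we use that $I_{\sfR_1}\otimes U_2$ acts trivially on $\sfR_1$, so the $\ket{\bar0}_{\sfR_1}$ can be pulled out past it. The inner expression then becomes $\ket{\bar0}_{\sfR_1}\otimes (\bra{\bar0}_{\sfR_2}\otimes I)U_2(\ket{\bar0}_{\sfR_2}\otimes I)=\ket{\bar0}_{\sfR_1}\otimes A_2$. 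Applying the outer projection of $U_1$ then gives $A_1A_2$.

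Finally, $W$ is a product of unitaries and hence unitary. Also $\norm{A_1A_2}\le\norm{A_1}\norm{A_2}\le 1$, which is consistent with the normalization convention in \Cref{def:block-encoding}.

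There is no substantive obstacle here. The only point requiring care is the tensor-ordering and bookkeeping of which register each operator acts on, so that the partial projections commute past the identity factors. To make this rigorous, I would write the full operator $W$ on $\sfR_1\otimes\sfR_2\otimes\sfS$ explicitly before inserting the resolution of the identity.
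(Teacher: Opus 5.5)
Your proof is correct and is essentially the standard argument behind \cite[Lemma~30]{GSLW19}, which the paper cites rather than reproducing. Both reduce to the same factorization: projecting onto $\ket{\bar0}_{\sfR_1}\ket{\bar0}_{\sfR_2}$ passes through the trivially-acting identity factors, giving $\bigl[(\bra{\bar0}_{\sfR_1}\otimes I)U_1(\ket{\bar0}_{\sfR_1}\otimes I)\bigr]\bigl[(\bra{\bar0}_{\sfR_2}\otimes I)U_2(\ket{\bar0}_{\sfR_2}\otimes I)\bigr]=A_1A_2$.
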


\begin{lemma}[Exact block encoding of POVM operators, adapted from~{\cite[Lemma~26]{GSLW19}}]
    \label{lemma:POVM-encoding}
    Let $0\preceq M\preceq I$ be a POVM operator. Let $U$ be a unitary circuit that uses $a$ ancillary qubits initialized to $\ket{\bar{0}}$ and outputs $0$ in a single-qubit register $\sfO$ with probability $\Tr\rbra*{M\rho}$ for every input state $\rho$.
    With a fresh single-qubit register $\sfF$ initialized to $\ket{0}$, $U^\dagger\CNOT_{\sfO\to\sfF}U$ is an exact block encoding of $M$ using $a+1$ ancillary qubits and one call to each of $U$ and $U^\dagger$.
\end{lemma}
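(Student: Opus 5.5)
We verify the block identity directly. Write the input space as $\sfO$ together with the remaining $a$ ancillary qubits and the message register, and note that $U$ acts on all of these. The ancillas of the construction are the $a$ original ancillas plus the fresh flag $\sfF$. The goal is to compute
\[ \rbra*{\bra{0}_\sfF\otimes\bra{\bar 0}\otimes I}\, U^\dagger \CNOT_{\sfO\to\sfF} U\, \rbra*{\ket{0}_\sfF\otimes\ket{\bar 0}\otimes I}. \]

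The first step is to decompose the action of $U$. For an arbitrary input vector $\ket{\phi}$ on the message register, write
\[ U(\ket{\bar 0}\ket{\phi}) = \ket{0}_\sfO\ket{g_0} + \ket{1}_\sfO\ket{g_1}, \]
where $\ket{g_0},\ket{g_1}$ are unnormalized vectors on the remaining qubits. Let $\Pi_0 \coloneqq \ketbra{0}{0}_\sfO$. The hypothesis that $U$ outputs $0$ with probability $\Tr(M\rho)$ for every input state says exactly that
\[ \rbra*{\bra{\bar 0}\otimes I}U^\dagger \Pi_0 U\rbra*{\ket{\bar 0}\otimes I} = M \]
as an operator on the message register. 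To see this, note that the left-hand side is a positive semidefinite operator whose expectation in every $\rho$ equals $\Tr(M\rho)$, and a Hermitian operator is determined by its expectation values on all states (by polarization).

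The second step is to apply the \CNOT{} and project the flag. Since $\sfF$ starts in $\ket{0}$, the controlled-NOT maps the state to
\[ \ket{0}_\sfO\ket{g_0}\ket{0}_\sfF + \ket{1}_\sfO\ket{g_1}\ket{1}_\sfF. \]
Because $U^\dagger$ does not act on $\sfF$, projecting $\sfF$ onto $\ket{0}$ commutes with $U^\dagger$. The $\sfF$-block of $\CNOT_{\sfO\to\sfF}$ between $\bra{0}_\sfF$ and $\ket{0}_\sfF$ is precisely $\Pi_0$. Therefore
\[ \rbra*{\bra{0}_\sfF\bra{\bar 0}\otimes I}U^\dagger\CNOT_{\sfO\to\sfF}U\rbra*{\ket{0}_\sfF\ket{\bar 0}\otimes I} = \rbra*{\bra{\bar 0}\otimes I}U^\dagger\Pi_0U\rbra*{\ket{\bar 0}\otimes I} = M, \]
which is the claimed block encoding. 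The resource counts are immediate from the construction: $a+1$ ancillas (the original $a$ plus $\sfF$) and one call each to $U$ and $U^\dagger$. The normalization condition $\norm{M}\le 1$ holds because $0\preceq M\preceq I$.

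The only subtle point is the operator identity in the first step, i.e. passing from agreement of acceptance probabilities on all states to equality of operators. Polarization handles it, because both sides are Hermitian and agree in expectation on every state. Everything else is bookkeeping showing that the flag projection commutes with $U^\dagger$, which holds since $U^\dagger$ acts trivially on $\sfF$.
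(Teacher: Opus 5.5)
Your proof is correct and is essentially the standard argument behind \cite[Lemma~26]{GSLW19}, which the paper cites rather than reproves. That argument identifies $(\bra{\bar{0}}\otimes I)U^\dagger\ketbra{0}{0}_{\sfO}U(\ket{\bar{0}}\otimes I)$ with $M$ because Hermitian operators with equal expectation values on all states coincide, and it notes that the $\bra{0}_{\sfF}(\cdot)\ket{0}_{\sfF}$ block of $\CNOT_{\sfO\to\sfF}$ is $\ketbra{0}{0}_{\sfO}$. The only blemish is cosmetic: $\sfO$ is one of the qubits $U$ acts on, not an extra qubit alongside ``the remaining $a$ ancillary qubits,'' and the decomposition into $\ket{g_0},\ket{g_1}$ is never actually used.
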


For a unitary $W$, write $\rmC(W)$ for its controlled version. The identity
\begin{equation}
    \label{eq:controlled-conjugation}
    \rmC(U^\dagger\CNOT_{\sfO\to\sfF}U)
    =(I\otimes U^\dagger)\rmC(\CNOT_{\sfO\to\sfF})(I\otimes U)
\end{equation}
shows that only the \CNOT{} requires an additional control.
Thus, if $U$ is exact over $\calG$, this block encoding has an exact controlled implementation by replacing the \CNOT{} with a \Toffoli{} gate.

The following lemma specializes~\cite[Lemma~29]{GSLW19}, using a uniform selector and reversible comparisons to implement dyadic coefficients, and extends the implementation underlying~\cite[Equation 14]{Rudolph26} to $t$ intervals: 
\begin{lemma}[Exact dyadic linear combinations of exactly block-encoded matrices, adapted from~{\cite[Lemma~29]{GSLW19}}]
    \label{lemma:uniform-selector}
    Let $N=2^k$ and let $n_1,\ldots,n_t$ be nonnegative integers such that $\sum_{j=1}^t n_j = N$. Assume that each operator $A_j$ on a common quantum register admits an exact block encoding with an exact controlled implementation.

    \noindent Then, $\frac{1}{N} \sum_{j=1}^t n_j A_j$ admits an exact block encoding using one controlled call per branch and $\poly(t,k)$ gates consisting of reversible integer comparisons and Hadamard gates.
    Ancillary qubits used for reversible comparisons are returned to zero by uncomputation. More generally, signed coefficients can be handled by absorbing each sign into the corresponding branch unitary. 
\end{lemma}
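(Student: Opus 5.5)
We prove \Cref{lemma:uniform-selector} with the standard LCU template: a selector register prepared by an exact uniform superposition, a reversible comparison circuit that turns the selector value into a branch label, controlled calls to the block encodings, and then the inverse preparation. Concretely, let $\sfK$ be a $k$-qubit register and set $\ket{+_N}\coloneqq\Had^{\otimes k}\ket{0}^{\otimes k}=N^{-1/2}\sum_{y=0}^{N-1}\ket{y}$, which is exact over $\calG$. Partition $\{0,\dots,N-1\}$ into consecutive intervals $I_j=[c_{j-1},c_j)$ with $c_j\coloneqq\sum_{i\le j}n_i$, so that $\abs{I_j}=n_j$.

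\emph{Step 1: compute the branch label.} For each $j$, compute a flag qubit $\sfG_j$ holding $[y\in I_j]$ into a fresh ancilla. This uses two reversible comparisons of $y$ against the classical constants $c_{j-1},c_j$, built from \Toffoli{} and $X$ gates (a \CNOT{} is a \Toffoli{} with a temporary $\ket{1}$ control). The comparisons use $\poly(k)$ scratch qubits, which are uncomputed immediately after each flag is written. Exactly one flag is $1$ for each $y$.

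\emph{Step 2: apply and uncompute.} For each $j$, apply $\rmC(U_j)$ controlled on $\sfG_j$, where $U_j$ block-encodes $A_j$ on its own ancilla register $\sfR_j$. Each $U_j$ has an exact controlled implementation by hypothesis, for example via \Cref{eq:controlled-conjugation}. Then uncompute all flags by rerunning the comparison circuits, and finally apply $\Had^{\otimes k}$ to $\sfK$. All ancillas of all branches (the $\sfR_j$, $\sfK$, the flags, and the scratch qubits) are initialized to $\ket{\bar 0}$.

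\emph{Step 3: verify the top-left block.} Because the flags are deterministic functions of $y$, the controlled unitaries act diagonally in $y$. On input $\ket{\bar0}\ket{\psi}$, the state after Step 2 but before the final Hadamards is
\[
N^{-1/2}\sum_j\sum_{y\in I_j}\ket{y}\,(U_j)\ket{\bar0}_{\sfR_j}\ket{\psi}\otimes\ket{\bar 0}_{\text{other }\sfR},
\]
with the flags and scratch qubits returned to $\ket{\bar0}$. Projecting $\sfK$ onto $\bra{+_N}$ contributes a factor $N^{-1/2}$ and sums over $y$, giving $\frac1N\sum_j n_j U_j$ on the remaining registers. Projecting each $\sfR_j$ onto $\bra{\bar0}$ then yields $\frac1N\sum_j n_j A_j$. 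For this last step, the branches not selected act as the identity on their own $\sfR_i$, so projecting those onto $\bra{\bar0}$ is harmless.

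\emph{Gate count and signs.} The circuit makes one controlled call per branch and uses $O(t)$ comparisons of $\poly(k)$ size plus $2k$ Hadamards, which is $\poly(t,k)$ overall. For signed coefficients, replace $U_j$ by $-U_j$, that is, apply a controlled global phase $-1$. This is a $Z=S^2$ on the flag $\sfG_j$ and is exact over $\calG$.

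The main obstacle is keeping everything exact. The comparison scratch space must be cleanly uncomputed, which holds because the comparisons are classical reversible circuits. The uniform superposition must be exact, which is why $N$ is required to be a power of two. With these two points secured, the projection argument in Step 3 is routine.
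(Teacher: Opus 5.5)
Your proposal is correct and is essentially the construction the paper relies on. The paper gives no separate proof; it describes the lemma as the LCU lemma of Gily\'en--Su--Low--Wiebe specialized to a uniform Hadamard selector over $N=2^k$ labels with reversible interval comparisons (extending Rudolph's construction to $t$ intervals), which matches your selector--flag--controlled-call--uncompute circuit, your top-left-block computation of $\frac{1}{N}\sum_j n_j A_j$, and your handling of signs via a $Z=S^2$ phase on the branch flag.
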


Finally, we use the exact kernel test from~\cite{Rudolph26}, which was originally stated for local Hamiltonians but applies to any matrix admitting an exact block encoding:

\begin{lemma}[Exact kernel test, adapted from~{\cite[Lemma~3.8 and Equation~(22)]{Rudolph26}}]
    \label{lem:signal-rejection}
    Let $U$ be an exact block encoding of $A$ using $a$ ancillary qubits initialized to $\ket{\bar{0}}$. The exact kernel test $\calT_A$ applies $U$ to $\ket{0}^{\otimes a}\ket{\psi}$, and rejects if measuring the ancillary qubits in the computational basis yields the all-zero outcome. For every quantum state $\ket{\psi}$, 
    \[ \Pr{\calT_A \text{ rejects } \ket{\psi}} = \norm{ A \ket{\psi} }^2. \]
    In particular, every quantum state (i.e., unit vector) in $\ker(A)$ is accepted with certainty. 
\end{lemma}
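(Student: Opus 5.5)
The plan is to write out the circuit explicitly and compute the probability of the all-zero ancilla outcome directly. Fix a unit vector $\ket{\psi}$ on the input register. Prepare $\ket{0}^{\otimes a}\ket{\psi}$, apply $U$, and measure the $a$ ancillary qubits in the computational basis. The test rejects exactly on the all-zero outcome. The probability of that outcome is the squared norm of the projection onto $\ket{0}^{\otimes a}\otimes I$:
\[
\Pr{\calT_A \text{ rejects } \ket{\psi}} = \norm*{\rbra*{\ketbra{0}{0}^{\otimes a}\otimes I}U\rbra*{\ket{0}^{\otimes a}\otimes \ket{\psi}}}^2 .
\]

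The key step is to rewrite this post-measurement vector using \Cref{def:block-encoding}. Factor the projector as $\ket{0}^{\otimes a}\otimes\bra{0}^{\otimes a}$ acting on the ancillas, with the identity on the rest. The vector then equals $\ket{0}^{\otimes a}\otimes \rbra*{\bra{0}^{\otimes a}\otimes I}U\rbra*{\ket{0}^{\otimes a}\otimes I}\ket{\psi}$. By the definition of exact block encoding, this is $\ket{0}^{\otimes a}\otimes A\ket{\psi}$. Since $\ket{0}^{\otimes a}$ has unit norm, the squared norm is $\norm{A\ket{\psi}}^2$, which gives the claimed identity.

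The ``in particular'' clause then follows immediately. If $\ket{\psi}\in\ker(A)$ is a unit vector, the rejection probability is $0$, so the state is accepted with certainty.

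No step should be a genuine obstacle; the argument is a one-line Born-rule computation. The only point needing care is bookkeeping of the register ordering and the convention that any normalization factor is absorbed into $A$. This is consistent because $\norm{A}\le 1$ is automatic: $A$ is a compression of a unitary, so the rejection probability lies in $[0,1]$. If desired, the same computation extends to mixed inputs $\rho$, giving $\Tr(A\rho A^\dagger)$ by linearity.
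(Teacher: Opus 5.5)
Your proof is correct and is essentially the argument behind the statement. The paper gives no separate proof, only a citation to Rudolph's Lemma~3.8 and Equation~(22), and that result is exactly this Born-rule computation: factor $\ketbra{0}{0}^{\otimes a}\otimes I$ as $(\ket{0}^{\otimes a}\otimes I)(\bra{0}^{\otimes a}\otimes I)$, then use \Cref{def:block-encoding} to identify the surviving vector as $\ket{0}^{\otimes a}\otimes A\ket{\psi}$.
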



\section{An endpoint-inward turn-halving transformation for quantum proof systems}
\label{sec:terminal-turn-halving}

The standard \emph{middle-outward} turn-halving transformation in~\cite[Section 4]{KKMV09} starts from a prover-supplied midpoint state, and the resulting proof systems use at least three messages. In this section, we establish a \emph{endpoint-inward} turn-halving transformation, assuming that both endpoint states are \emph{efficiently preparable} by the verifier. This extra resource allows the verifier to coherently prepare both endpoints and simultaneously evolve the two branches, one forward and one backward, in superposition. Consequently, our transformation halves an even number of messages \emph{without retaining the extra public-coin turn.} 

\subsection{The endpoint-inward turn-halving transformation}
\label{subsec:endpoint-fold}

\begin{theorem}[Endpoint-inward turn-halving transformation]
    \label{thm:inward-turn-halving}
    There is a polynomial-time transformation that given as input the circuit description of an interactive verifier $V$ returns a circuit description of an interactive verifier $V'$ such that the following holds. 
        Suppose that the quantum proof system $\protocol{P}{V}$ has $4r$ messages, for some $r\geq 1$, perfect completeness and soundness at most $s$. Suppose further that there is an honest prover $P$ such that, in the case of a yes-instance, the terminal state on the message and verifier's private registers $(\sfM,\sfW)$ before the final measurement is an \emph{efficiently preparable} pure state.\footnote{The sizes of these registers $\sfM$ and $\sfW$ may decrease during the interaction, since the prover's actions may be \emph{partial isometries} rather than unitaries.} Then, the quantum proof system
         $\protocol{P'}{V'}$ has $2r$ messages, perfect completeness and soundness at most $g_\inward(s) \coloneqq (1+\sqrt s)/2$.
\end{theorem}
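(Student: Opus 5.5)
I will build $V'$ directly from the structure sketched in \Cref{subsubsec:inward-turn-halving}. Write the original $4r$-message interaction as verifier unitaries $V_1,\dots,V_{2r+1}$ interleaved with prover actions $P_1,\dots,P_{2r}$. Let $\ket{\psi_\init}$ be the initial (all-zero) state on $(\sfM,\sfW)$ and $\ket{\psi_\final}$ the efficiently preparable terminal state given by the hypothesis. The midpoint is the state right after $P_r$ (equivalently before $V_{r+1}$).

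\textbf{Construction.} $V'$ prepares $(\ket{0}_\sfB\ket{0}_{\sfB'}+\ket{1}_\sfB\ket{1}_{\sfB'})/\sqrt2$, keeps $\sfB$ private, and conditioned on $\sfB$ prepares $\ket{\psi_\init}$ or $\ket{\psi_\final}$. Padding is used so that both branches live on a common register space; this is needed because registers may shrink, as noted in the footnote to the theorem. Conditioned on $\sfB=0$, $V'$ applies $V_1$; conditioned on $\sfB=1$, it applies $V_{2r+1}^\dagger$. The $2r$ messages of the new system then interleave as follows. In the $j$-th round, $V'$ sends the message register together with $\sfB'$ (only in the first message). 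The prover applies $P_j$ on the $0$-branch and $P_{2r+1-j}^\dagger$ on the $1$-branch, which she can do coherently since she holds $\sfB'$. $V'$ then applies $V_{j+1}$ (if $\sfB=0$) or $V_{2r+1-j}^\dagger$ (if $\sfB=1$). After $r$ prover rounds both branches sit at the midpoint, and the last message returns $\sfB'$ together with the message register, so the new system has $2r$ messages. Finally $V'$ measures $(\sfB,\sfB')$ and accepts iff it obtains $\ket{\Phi^+}$; for instance it can apply \CNOT\ and \Had\ and check for $00$.

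\textbf{Completeness.} For yes instances the honest forward run from $\ket{\psi_\init}$ reaches the midpoint $\ket{\mu}$. Because the terminal state is exactly $\ket{\psi_\final}$ and every step is a (partial) isometry, running backward from $\ket{\psi_\final}$ reaches the same $\ket{\mu}$. The joint state is therefore $\ket{\Phi^+}_{\sfB\sfB'}\otimes\ket{\mu}$, and $V'$ accepts with certainty. The point that needs care is that the honest prover's actions must be written as isometries whose inverses are well defined on the relevant support; I would purify the honest prover and track the prover's private workspace as part of $\ket{\mu}$.

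\textbf{Soundness.} Any cheating prover's actions, conditioned on $\sfB$, define unnormalized branch vectors $\ket{a}$ (from $\sfB=0$) and $\ket{b}$ (from $\sfB=1$) on everything except $\sfB$, with $\norm{\ket a}^2+\norm{\ket b}^2=2\cdot\frac12\cdot 2$ suitably normalized, i.e.\ each branch has norm at most $1$. The acceptance probability is $\tfrac14\norm{\ket a+\ket b}^2\le \tfrac12(1+\mathrm{Re}\innerprod{a}{b})$. The main obstacle is to show $\abs{\innerprod{a}{b}}\le\sqrt s$. My plan is to read $\innerprod{a}{b}$ as a fidelity-type quantity. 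The forward branch's prover actions give the first half of a legitimate prover for $\protocol{P}{V}$, so they define a midpoint state $\rho$. The backward branch, read in reverse, is a legitimate second half; through Uhlmann's theorem (\Cref{lemma:uhlmann}) the maximal overlap with $\ket{\psi_\final}$ becomes the overlap between $\ket{\psi_\final}$ and $\ket{\psi_\init}$ evolved through a full prover strategy. Because the final measurement projects onto the accepting subspace containing $\ket{\psi_\final}$, that overlap is at most $\sqrt{\Pr{\text{accept}}}\le\sqrt s$. Concretely, the unitary $U_\sfR$ from \Cref{lemma:uhlmann} on the prover's side stitches the two half-strategies into a single $4r$-message strategy, and Cauchy--Schwarz against the acceptance projector gives $\abs{\innerprod{a}{b}}\le\sqrt s$, hence soundness $\le(1+\sqrt s)/2=g_\inward(s)$. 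The delicate parts are the bookkeeping of which registers are traced versus held by the prover, and handling partial isometries; for the latter I would restrict to isometric extensions so that norms are at most $1$.
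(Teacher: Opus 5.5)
Your architecture is the paper's: a coherent superposition of the two endpoints labelled by $\sfB,\sfB'$, forward and backward evolution, a final $\ket{\Phi^+}$ test, and the cross term bounded by an accepting amplitude of a stitched strategy. However, the construction as written has an off-by-one error. There are only $2r+1$ verifier unitaries, yet your forward branch applies $V_1,\dots,V_{r+1}$ and your backward branch applies $V_{2r+1}^\dagger,\dots,V_{r+1}^\dagger$, so $V_{r+1}$ is used on both sides. Directly after the $r$-th prover action the forward branch sits just before $V_{r+1}$ and the backward branch just after it. After your round-$r$ verifier step this is reversed. Either way the branches never meet, so completeness fails. On the soundness side, your stitched strategy would apply $V_{r+1}$ twice, so it is not a run of $V$. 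The fix is the paper's Step~3: in the last round apply $V_{r+1}$ only on the $\sfB=0$ branch and the identity on $\sfB=1$ (or symmetrically $V_{r+1}^\dagger$ only on $\sfB=1$). You should also state explicitly that the honest prover prepares the pure final private state $\ket{\chi_{\final}}$ controlled on $\sfB'=1$. This is what makes the backward run from $\ket{\psi_{\final}}$ retrace the honest global state.

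Second, you drop the paper's per-round projections onto $\spanset\{\ket{00},\ket{11}\}$ and leave $\sfB'$ with the prover, so the prover may act on $\sfB'$ arbitrarily. Then $\frac14\norm{\ket a+\ket b}^2$, with $\ket a,\ket b$ on everything except $\sfB$, is not the acceptance probability. Write the final state as $(\ket{0}_\sfB\ket{A}+\ket{1}_\sfB\ket{C})/\sqrt2$. The acceptance probability is $\frac14\norm{(\bra{0}_{\sfB'}\otimes I)\ket{A}+(\bra{1}_{\sfB'}\otimes I)\ket{C}}^2$, whose cross term is $\bra{C}(\ketbra{1}{0}_{\sfB'}\otimes I)\ket{A}$. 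Your Uhlmann idea handles this correctly if you treat $\sfB'$ as prover-side memory:
\begin{itemize}
\item $\ketbra{1}{0}_{\sfB'}\otimes I_\sfP$ is a contraction on the prover's registers, hence an average of two unitaries.
\item So the cross term is at most $\max_U\abs{\bra{C}U\ket{A}}$, which is the fidelity of the verifier-side midpoint reductions.
\item Inserting the optimal $U$ at the midpoint yields a legitimate $4r$-message strategy, and $\ketbra{\psi_{\final}}{\psi_{\final}}\preceq\Pi_{\mathrm{acc}}$ with Cauchy--Schwarz gives the bound $\sqrt s$.
\end{itemize}
Even Uhlmann is unnecessary here: a one-ancilla unitary dilation of $\ketbra{1}{0}_{\sfB'}$ suffices.

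The paper instead enforces branch isolation with $\Pi_{\match}$ every round, so only the diagonal blocks $Q^{\tt 0}_j,Q^{\tt 1}_j$ survive. It then reads $(E^{\tt 1})^\dagger E^{\tt 0}$ directly as the $\ket{0^r1^r}$-component of a stitched strategy, with fresh qubits replacing $\sfB'$ at each turn. With the two repairs above your argument goes through; it saves the intermediate checks and the resending of $\sfB'$, at the cost of the extra contraction/fidelity step.
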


\begin{proof}
    We start by presenting the new proof system $\protocol{P'}{V'}$. $V'$ uses two single-qubit registers $\sfB$ and $\sfB'$ to label the forward and backward branches, keeping $\sfB$ in its private memory and sending $\sfB'$ to the prover.
    Without loss of generality, we assume that the endpoint states $\ket{\psi_\init}_{\sfM\sfW}$ and $\ket{\psi_\final}_{\sfM\sfW}$ have the same dimension,\footnote{This assumption can be achieved simply by padding with sufficiently many $\ket{0}$ qubits.} and define
    \begin{equation}
        \label{eq:endpoint-superposition}
        \ket{\Xi} \coloneqq \frac{\ket{00}_{\sfB\sfB'}\ket{\psi_\init}_{\sfM\sfW}
            +\ket{11}_{\sfB\sfB'}\ket{\psi_\final}_{\sfM\sfW}}{\sqrt{2}}.
    \end{equation}

    Let $\Pi_{\match} \coloneqq \ketbra{00}{00}_{\sfB\sfB'}+\ketbra{11}{11}_{\sfB\sfB'}$, where $\Pi_\match$ projects onto $\spanset\cbra*{\ket{00}_{\sfB\sfB'},\ket{11}_{\sfB\sfB'}}$. The new proof system $\protocol{P'}{V'}$ is described in \Cref{protocol:inward-turn-halving}. 

    \begin{algorithm}[ht!]
    \SetEndCharOfAlgoLine{.}
    \SetAlgoVlined
    \setlength{\parskip}{6pt}
    \SetKwFor{For}{For}{:}{}
    \SetKwComment{Comment}{// }{}

    \smallskip
    \textbf{1.} $V'$ prepares $\ket{\Xi}$ from \Cref{eq:endpoint-superposition} and sends $(\sfB',\sfM)$\;

    \BlankLine
    {\color{gray}\Comment{Execute $\protocol{P}{V}$ simultaneously forward and backward from the endpoints.}}
    \textbf{2.} \For{$j \leftarrow1$ \KwTo $r-1$}{
        \textbf{2.1} $V'$ receives $(\sfB',\sfM)$, possibly modified by $P'$\;
        {\color{gray}\Comment{Keep the two branches isolated using $\Pi_\match$.}}
        \textbf{2.2} $V'$ projects $(\sfB,\sfB')$ onto $\spanset\cbra*{\ket{00},\ket{11}}$ and rejects on the orthogonal outcome\;
        \textbf{2.3} $V'$ applies $\ketbra{0}{0}_{\sfB}\otimes V_{j+1}
         +\ketbra{1}{1}_{\sfB}\otimes V_{2r+1-j}^{\dagger}$ to $(\sfB, \sfM, \sfW)$, and sends $(\sfB',\sfM)$\;
    }
    \textbf{3.} $V'$ receives $(\sfB',\sfM)$, and applies $\ketbra{0}{0}_{\sfB}\otimes V_{r+1}
     +\ketbra{1}{1}_{\sfB}\otimes I_{\sfM\sfW}$ to $(\sfB, \sfM, \sfW)$\;

    \BlankLine
    {\color{gray}\Comment{Accept if the two branches meet at the midpoint.}}
    \textbf{4.} $V'$ projects $(\sfB,\sfB')$ onto $\ket{\Phi^+}$ and accepts on this outcome.
    \BlankLine
    \caption{Inward turn-halving transformation from $4r$ messages to $2r$ messages.}
    \label[algorithm]{protocol:inward-turn-halving}
    \end{algorithm}

    We now analyze $\protocol{P'}{V'}$. 

    \begin{itemize}
        \item For \emph{yes} instances, let partial isometries $P_1,\ldots,P_{2r}$ denote the prover's actions on its private and message registers $(\sfP,\sfM)$ in the original proof system $\protocol{P}{V}$, where $P_j$ denotes the $j$-th action. Let $\ket{\chi_\init}$ be the initial state in the prover's private register. Perfect completeness and the rank-one terminal projector $\ketbra{\Phi^+}{\Phi^+}$ guarantee that
        \[ P_{2r}V_{2r}P_{2r-1}\cdots V_2P_1
        \rbra[\big]{\ket{\psi_\init}\ket{\chi_\init}}
        =\ket{\psi_\final}\ket{\chi_\final} \]
        for some pure private state $\ket{\chi_\final}$. Let $\ket{\psi_\midd}_{\sfP\sfM\sfW}$ be the global pure state immediately after $V_{r+1}$, with (reduced) midpoint state $\rho_\midd=\Tr_{\sfP}\ketbra{\psi_\midd}{\psi_\midd}$. 

        The honest prover $P'$ coherently prepares $\ket{\chi_\init}$ in the forward branch ($\sfB'=0$) and $\ket{\chi_\final}$ in the backward branch ($\sfB'=1$), and its subsequent actions (except for the last action) simply follow the corresponding actions in $\protocol{P}{V}$. Consequently, both branches reach the state $\ket{\psi_\midd}$. Since $P'$ never changes $\sfB'$, the joint state on $(\sfB,\sfB')$ remains supported on $\spanset\cbra*{\ket{00},\ket{11}}$, so each intermediate projection leaves the joint state unchanged. Immediately before the final projection $\Pi_\EPR\coloneqq \ketbra{\Phi^+}{\Phi^+}$, the state is
        \[ \frac{\ket{00}_{\sfB\sfB'}\ket{\psi_\midd}
              +\ket{11}_{\sfB\sfB'}\ket{\psi_\midd}}{\sqrt2}
            =\ket{\Phi^+}_{\sfB\sfB'}\ket{\psi_\midd}. \]
        Therefore, applying $\Pi_\EPR$ leaves the state unchanged, and $V'$ accepts with certainty. 
        
        \item For \emph{no} instances, let partial isometries $P'_1,\ldots,P'_r$ denote an arbitrary cheating prover's actions on $(\sfP,\sfM,\sfB')$, with its initial private state $\ket{\chi}$. Define the branch-diagonal blocks of $P'_j$ by 
        \[ Q_j^{\tt 0}=(\bra{0}_{\sfB'}\otimes I)P'_j(\ket{0}_{\sfB'}\otimes I)
        \quad\text{and}\quad
        Q_j^{\tt 1}=(\bra{1}_{\sfB'}\otimes I)P'_j(\ket{1}_{\sfB'}\otimes I). \]
        Also define the composite forward and backward operators:
        \[ E^{\tt 0} \coloneqq V_{r+1}Q_r^{\tt 0}V_rQ_{r-1}^{\tt 0}\cdots V_2Q_1^{\tt 0}
        \quad\text{and}\quad
        E^{\tt 1} \coloneqq Q_r^{\tt 1}V_{r+2}^{\dagger}Q_{r-1}^{\tt 1}V_{r+3}^{\dagger} \cdots Q_2^{\tt 1}V_{2r}^{\dagger}Q_1^{\tt 1}, \]
        where $E^{\tt 1}=Q_1^{\tt 1}$ when $r=1$. The corresponding  forward and backward \emph{unnormalized} states are thus defined by 
        \[ \ket{\xi^{\tt 0}} \coloneqq E^{\tt 0}(\ket{\psi_\init}\ket{\chi})
            \quad\text{and}\quad
            \ket{\xi^{\tt 1}} \coloneqq E^{\tt 1}(\ket{\psi_\final}\ket{\chi}). \]
        Since $E^{\tt 0}$ and $E^{\tt 1}$ are diagonal blocks of partial isometries, it follows that $\norm{\ket{\xi^{\tt 0}}}\leq 1$ and $\norm{\ket{\xi^{\tt 1}}}\leq 1$. Consequently, the unnormalized component that survives the projection $\Pi_\match$ and can overlap with $\Pi_\EPR$ is $\rbra*{\ket{00}_{\sfB\sfB'}\ket{\xi^{\tt 0}}+\ket{11}_{\sfB\sfB'}\ket{\xi^{\tt 1}}}/\sqrt{2}$, giving
        \begin{equation}
            \label{eq:bell-cross}
            \Pr{\protocol{P'}{V'} \text{ accepts}} 
            = \frac{1}{4}\norm*{\ket{\xi^{\tt 0}}+\ket{\xi^{\tt 1}}}^{2}
            \leq\frac{1}{2}\rbra*{1+\abs*{\innerprod{\xi^{\tt 1}}{\xi^{\tt 0}}}}.
        \end{equation}

        To establish the soundness bound, it remains to bound the cross term $\abs*{\innerprod{\xi^{\tt 1}}{\xi^{\tt 0}}}$. 
        Let $\ket{\zeta}$ denote the joint state immediately before the final measurement of the original verifier. Let $\ket{0^r1^r} \coloneqq \ket{0}^{\otimes r}\ket{1}^{\otimes r}$. Then, $\ket{\zeta}$ admits the following decomposition     
        \[ \ket{\zeta} = \rbra*{\rbra*{E^{\tt 1}}^\dagger E^{\tt 0} \rbra*{\ket{\psi_\init}\ket{\chi}}}\otimes \ket{0}^{\otimes r}\ket{1}^{\otimes r} + \ket{\zeta_\perp}. \]
        Here, $\ket{\zeta_\perp}$ denotes the remaining component, whose projection onto $\ket{0^r1^r}$ on the fresh qubits replacing $\sfB'$ at each prover turn is zero.
        
        A direct calculation then gives
        \begin{subequations}
        \label{eq:cross-term}
        \begin{align}
            \abs*{\innerprod{\xi^{\tt 1}}{\xi^{\tt 0}}}^2
            &= \abs*{ \bra{\psi_\final}\bra{\chi} (E^{\tt 1})^\dagger E^{\tt 0} \ket{\psi_\init}\ket{\chi} }^2\\
            &= \abs*{ \rbra*{\bra{\psi_\final}\bra{\chi}\otimes \bra{0^r1^r}} \ket{\zeta} }^2\\
            &\leq \norm*{(\bra{\psi_\final}\otimes I)\ket{\zeta}}^2
            = \Pr{\protocol{P}{V}\text{ accepts}}.
        \end{align}
        \end{subequations}

        Combining \Cref{eq:bell-cross,eq:cross-term} with the original soundness bound, we conclude that $V'$ accepts with probability at most $(1+\sqrt{s})/2$. \qedhere
    \end{itemize}
\end{proof}

\subsection{An implication: Achieving perfect completeness for \QIPtwo{}}
\label{subsec:qip2-application}

The main implication of our inward turn-halving transformation is that $\QIPtwo = \QIPtwo_1$, as stated in \Cref{thm:QIP(2)-informal}, enabled by the observation in \Cref{fact:EPR-terminal-state} that the terminal state is exactly an EPR pair. We now prove this implication.

\begin{proof}[Proof of \Cref{thm:QIP(2)-informal}]
    It suffices to show that \QIPtwo{} achieves perfect completeness. Consider a promise problem $\calI=(\calI_\yes,\calI_\no)\in\QIPtwo$ and a corresponding two-message proof system $\protocol{P}{V}$ with completeness at least $2/3$ and soundness at most $1/3$. 

    Applying two elementary transformations to $\protocol{P}{V}$, namely the exact-half calibration (\Cref{lemma:exact-half-calibration}), followed by the EPR-pair completion (\Cref{lemma:EPR-completion}), yields a four-message quantum proof system $\protocol{P'}{V'}$ with perfect completeness and soundness $g_\EPR(1/3)=\frac{1}{2}+\frac{\sqrt{2}}{3}$ such that both the initial state and the terminal state are efficiently preparable. The latter property follows from \Cref{fact:EPR-terminal-state}, obtained by inspecting the proof of \Cref{lemma:EPR-completion}. 
    
    Next, we apply our inward turn-halving transformation in \Cref{thm:inward-turn-halving} (with $r=1$) to obtain a two-message proof system $\protocol{P''}{V''}$ with perfect completeness. For soundness, combining the soundness bounds in \Cref{lemma:EPR-completion,thm:inward-turn-halving} gives
    \[ g_\inward\rbra[\bigg]{g_\EPR\rbra[\bigg]{\frac{1}{3}}} = \frac{1}{2} \rbra[\Bigg]{1+\sqrt{\frac{1}{2}+\frac{\sqrt{2}}{3}}} 
    = \frac{1}{2} \rbra[\bigg]{ 1 + \frac{1+\sqrt{2}}{\sqrt{6}} }
    = s_\star. \]
    Therefore, $V''$ accepts with probability at most $s_\star$, as desired. 
\end{proof}


\section{Achieving perfect completeness for \qqQAM{}}
\label{sec:qq-perfect-completeness}

While the exact-half calibration (\Cref{lemma:exact-half-calibration}) and the EPR-pair completion (\Cref{lemma:EPR-completion}) also apply to \qqQAM{} proof systems, combining them with our inward turn-halving transformation (\Cref{thm:inward-turn-halving}) does not preserve the quantum-public-coin form of the verifier's message: the forward branch sends halves of the original $w$ EPR pairs, $I_\sfW/\dim(\sfW)$, where $\sfW$ is the verifier's private register with the same dimension as the message register $\sfM$, while the backward branch sends a half of the terminal EPR pair, $I_{\sfZ}/2$. Therefore, the approach of \Cref{thm:inward-turn-halving} does not apply directly because the two maximally mixed states have \emph{different} dimensions.

In this section, we overcome this obstruction by encoding the two endpoint states into a \textsc{Fixed-Target Close Image} instance, which is essentially the \qqQAM{}-complete problem \textsc{Close Image to Totally Mixed} from~\cite{KLGN19}, but formulated in terms of squared fidelity:

\begin{definition}[Fixed-Target Close Image, \FixedCI{}, adapted from~{\cite{KLGN19}}]
    \label{def:FixedCI}
    Let $Q$ be a unitary quantum circuit such that $Q \colon (\sfX,\sfR_\In) \mapsto (\sfY,\sfR_\Out)$, with designated input register $\sfX$ and output register $\sfY$. The registers $\sfR_\In$ and $\sfR_\Out$ consist of all qubits other than those in the input and output registers, respectively. Define the induced channel $\calN$ and the fixed target $\sigma$ by 
    \begin{equation}
        \label{eq:FixedCI-def}
        \calN(\rho) \coloneqq \Tr_{\sfR_\Out}\rbra*{ Q (\rho \otimes \ketbra{\bar{0}}{\bar{0}}_{\sfR_\In}) Q^\dagger } 
        \quad\text{and}\quad
        \sigma \coloneqq \frac{I_\sfY}{\dim(\sfY)}.
    \end{equation}
    Let $a(n)$ and $b(n)$ be efficiently computable functions such that $0 \leq b(n) < a(n) \leq 1$. The promise problem $\FixedCI[a(n),b(n)]$ asks to distinguish between the following two cases: 
    \begin{itemize}
        \item \emph{Yes:} There exists a quantum state $\rho$ on $\sfX$ such that $\F^2\rbra[\big]{\calN(\rho),\sigma}\geq a(n)$.
        \item \emph{No:} Every quantum state $\rho$ on $\sfX$ satisfies $\F^2\rbra[\big]{\calN(\rho),\sigma}\leq b(n)$.
    \end{itemize}
\end{definition}

\subsection{A fixed-target \textsc{Close Image} construction}
\label{subsec:qq-close-image}

To establish \Cref{thm:qqQAM-informal}, we need the following \qqQAM{} proof system of \FixedCI{}: 

\begin{lemma}[$\FixedCI\in\qqQAM$, adapted from~{\cite[Lemma~4.1]{KLGN19}}]
    \label{lemma:FixedCI-in-qqQAM}
    For every $\FixedCI[a,b]$ instance $Q$, there is an explicit \qqQAM{} proof system with completeness $a$ and soundness $b$. 
    
    \noindent In particular, the maximum acceptance probability of the underlying \qqQAM{} proof system $\protocol{P}{V}$, as described in \Cref{protocol:qq-close-image} (cf.~\cite[Figure~2]{KLGN19}), is exactly $\max_{\rho}\F^2\rbra[\big]{\calN(\rho),\sigma}$, where $\calN(\rho)$ and $\sigma$ are specified in \Cref{eq:FixedCI-def}.
\end{lemma}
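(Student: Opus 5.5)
The plan is to make the standard Close Image protocol explicit and show its optimal acceptance probability equals $\max_\rho \F^2(\calN(\rho),\sigma)$. Write $\sfY'$ for a copy of $\sfY$. The verifier $V$ prepares $\dim(\sfY)$-dimensional maximally entangled pairs on $(\sfY,\sfY')$, i.e.\ $\log\dim(\sfY)$ EPR pairs. It keeps $\sfY'$ and sends $\sfY$; this is exactly the quantum-public-coin form, since $\Tr_{\sfY'}$ of this state is $\sigma=I_\sfY/\dim(\sfY)$. The prover returns a register $\sfR_\Out$. The verifier then holds $(\sfY,\sfR_\Out,\sfY')$; it applies $Q^\dagger$ to $(\sfY,\sfR_\Out)$, measures $\sfR_\In$, and accepts iff the outcome is $\ket{\bar 0}$. (The message $\sfY$ must be returned, so the prover's reply is $(\sfY,\sfR_\Out)$; this matches the form of \cite[Figure~2]{KLGN19}.)

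For the upper bound, fix an arbitrary prover, and let $\ket{\omega}$ be the global pure state on $(\sfY,\sfR_\Out,\sfY',\sfP)$ after the prover's reply. Its marginal on $\sfY'$ is still $I/\dim(\sfY)$, since the prover never touches $\sfY'$. The verifier accepts with probability $\norm{(\bra{\bar0}_{\sfR_\In}\otimes I)(Q^\dagger\otimes I)\ket\omega}^2$. Decompose $\ket{\omega'}\coloneqq(Q^\dagger\otimes I)\ket{\omega}=\ket{\bar0}_{\sfR_\In}\ket{\eta}+\ket{\perp}$. Here $\ket\eta$ is unnormalized on $(\sfX,\sfY',\sfP)$, with $\norm{\eta}^2=p_{\mathrm{acc}}$. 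Let $\rho$ be the normalized marginal of $\ket\eta$ on $\sfX$. The state $Q(\ket{\bar0}\ket\eta)/\norm\eta$ has $\sfY$-marginal $\calN(\rho)$. Its overlap with $\ket\omega$ is $\norm\eta$, so $p_{\mathrm{acc}}=\abs{\langle \omega|Q(\ket{\bar 0}\ket{\eta}/\norm{\eta})\rangle}^2$. Tracing out everything except $(\sfY,\sfY')$, data processing (\Cref{lemma:fidelity-data-processing}) bounds this by $\F^2$ of the two $(\sfY,\sfY')$-marginals. But $\omega$'s $(\sfY,\sfY')$-marginal need not be related to $\sigma$, so I would instead trace down to $\sfY$ alone more carefully. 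The correct route uses Uhlmann (\Cref{lemma:uhlmann}): $p_{\mathrm{acc}}\le \F^2(\omega_{\sfY'},\tau_{\sfY'})$, where $\tau$ is the other state's $\sfY'$-marginal. The key step is to exploit the maximally entangled structure through the transpose trick. Since $\omega_{\sfY\sfY'}$ originated from the EPR state and the prover acted only on $\sfY$, consider the fidelity on $\sfY'$ together with the purifying system $(\sfY,\sfR_\Out,\sfP)$. The cleaner formulation is to run Uhlmann in reverse: the best prover maximizes the overlap between the fixed purification $\ket{\Phi}_{\sfY\sfY'}$ of $\sigma$ and $(Q\otimes I)(\ket{\bar0}\ket{\eta'})$, over all purifications, with the prover acting on everything but $\sfY'$. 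By \Cref{lemma:uhlmann}, this maximum equals $\F^2$ of the $\sfY'$-marginals. The $\sfY'$-marginal of $Q(\ket{\bar 0}\ket{\eta'})$ can be made, for a suitable choice of $\eta'$, equal to $\calN(\rho)^{T}$ under the identification of $\sfY'$ with $\sfY$; this yields $\F^2(\sigma,\calN(\rho)^T)=\F^2(\sigma,\calN(\rho))$, because $\sigma$ is invariant under transpose.

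For the lower bound I would realize this optimum explicitly. Take $\rho$ attaining the maximum, and purify it as $\ket{\rho}_{\sfX\sfS}$. Consider $\ket{\theta}\coloneqq Q(\ket\rho\ket{\bar0})$ on $(\sfY,\sfR_\Out,\sfS)$, whose $\sfY$-marginal is $\calN(\rho)$. Both $\ket\theta$ and $\ket\Phi_{\sfY\sfY'}$ then purify states on $\sfY$. Uhlmann supplies an isometry from $\sfY'$ into $(\sfR_\Out,\sfS)$ plus the prover's space achieving overlap $\F^2(\calN(\rho),\sigma)$. The prover cannot act on $\sfY'$, but it holds $\sfY$ of the EPR state, and the transpose trick transfers the isometry onto $\sfY$. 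So the prover applies the corresponding operation and returns $(\sfY,\sfR_\Out)$; the acceptance probability is then this squared overlap.

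The main obstacle is this bookkeeping of which side of the EPR pair the Uhlmann isometry acts on, and confirming that the returned registers have the right dimensions. Once the exact equality $p^*_{\mathrm{acc}}=\max_\rho\F^2(\calN(\rho),\sigma)$ is in hand, completeness $a$ and soundness $b$ follow immediately from the promise in \Cref{def:FixedCI}.
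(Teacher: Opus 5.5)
There is a genuine gap, and it is in the protocol itself, not only in the bookkeeping. In your version the verifier keeps $\sfY'$, sends $\sfY$, and then applies $Q^\dagger$ to $(\sfY,\sfR_\Out)$. Both of these registers now come from the prover, and the retained half $\sfY'$ never enters the test. A prover can move the received halves into its private space and return $Q(\ket{x}_{\sfX}\otimes\ket{\bar 0}_{\sfR_\In})$ for any fixed $x$; it is then accepted with certainty on every instance. So the claimed equality $p^*_{\mathrm{acc}}=\max_\rho\F^2(\calN(\rho),\sigma)$ is false for your protocol, whose value is always $1$.

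This is exactly where your upper-bound argument stalls. Tracing down to $\sfY$ gives $\F^2(\omega_\sfY,\calN(\rho))$, but $\omega_\sfY$ is prover-controlled rather than equal to $\sigma$. Tracing down to $\sfY'$ gives $\F^2\bigl(I/\dim(\sfY),\,\eta_{\sfY'}/\norm{\eta}^2\bigr)$, which is unrelated to $\calN(\rho)$ because $Q$ does not act on $\sfY'$. The transpose trick cannot repair this. It only moves operators between the two halves of an intact maximally entangled state, whereas here the test ignores the half the prover cannot touch. Also, ``can be made, for a suitable choice of $\eta'$'' is an existential statement, which is the wrong quantifier for an upper bound. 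Finally, your remark that the prover must return $\sfY$ does not match Protocol 2.

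The fix is the orientation used in Protocol 2. The verifier keeps the half that serves as $Q$'s output register (call it $\sfY$), sends the other half $\sfR$, receives only $\sfR_\Out$, and applies $Q^\dagger$ to the retained $\sfY$ together with $\sfR_\Out$.

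With this change, your first computation goes through verbatim. The post-reply state $\ket{\omega}$ on $(\sfY,\sfR_\Out,\sfP)$ has $\omega_\sfY=\sigma$, because the prover never touches $\sfY$. Your overlap identity gives $p_{\mathrm{acc}}=\abs{\bra{\omega}(Q\otimes I)(\ket{\bar 0}\ket{\eta})}^2/\norm{\eta}^2$. Data processing down to $\sfY$ then gives $p_{\mathrm{acc}}\le\F^2(\sigma,\calN(\rho))$.

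For the lower bound, $\ket{\Phi}_{\sfY\sfR}$ and $Q(\ket{\rho}_{\sfX\sfS}\ket{\bar 0}_{\sfR_\In})$ purify $\sigma$ and $\calN(\rho)$ on the same register $\sfY$. Uhlmann's isometry therefore acts on the prover's register $\sfR$ (padded with ancillas) and maps into $(\sfR_\Out,\sfS)$. Since $Q(\ket{\rho}\ket{\bar 0})$ lies in the accepting subspace, the acceptance probability is at least $\F^2(\calN(\rho),\sigma)$. No transpose trick is needed.
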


\begin{algorithm}[ht!]
    \SetEndCharOfAlgoLine{.}
    \SetAlgoVlined
    \setlength{\parskip}{6pt}
    
    \BlankLine
    \textbf{1.} $V$ prepares $y$ EPR pairs on $(\sfY,\sfR)$ and sends the $y$ halves in $\sfR$\;

    \BlankLine
    \textbf{2.}  $V$ receives $\sfR_\Out$, possibly modified by $P$;

    \BlankLine
    \textbf{3.} $V$ applies $Q^\dagger$ to $(\sfY,\sfR_\Out)$, measures $\sfR_\In$ in the computational basis, and accepts when the measurement outcome is all-zero.
    \BlankLine
    \caption{A $\qqQAM[a,b]$ proof system for $\FixedCI[a,b]$.}
    \label[algorithm]{protocol:qq-close-image}
\end{algorithm}

Next, we proceed with the actual proof: 

\begin{proof}[Proof of \Cref{thm:qqQAM-informal}]
Fix a promise problem $\calI\in\qqQAM$ with a corresponding proof system $\protocol{P}{V}$ of completeness at least $2/3$ and soundness at most $1/3$. By \Cref{lemma:FixedCI-in-qqQAM}, for each input $x\in\calI$, it suffices to efficiently construct a $\FixedCI[1,s_\star]$ instance $Q_\qq$, with its induced channel denoted by $\calN_\qq$ and fixed target $\sigma_\qq$.  

\paragraph{Constructing the quantum channel $\calN_\qq$.} 
We begin with a single-output-qubit channel $\calR(\cdot)$ that embeds the single-qubit exact-half calibration (\Cref{lemma:exact-half-calibration}), where the input register is to be specified and the output register is denoted by $\sfZ$. Let $\sfW$ be $V$'s private register containing halves of $w$ EPR pairs, with $\dim(\sfW)=2^w$, and $\sfM'$ be the message register received from $P$. Let $E$ be the acceptance operator corresponding to the verifier's final action on $(\sfM',\sfW)$. 

The designated input register of $\calR$ (and also $\calN_\qq$) is $\sfX\coloneqq(\sfA,\sfM',\sfW)$, where $\sfA$ denotes the single-qubit register added by the calibration (\Cref{lemma:exact-half-calibration}). We now write the calibration channel $\calR(\cdot)$ explicitly: 
\[ \calR(\rho) \coloneqq (1-\gamma_\rho)\ketbra{0}{0}_{\sfZ}+\gamma_\rho\ketbra{1}{1}_{\sfZ} 
\quad\text{where}\quad 
\gamma_\rho\coloneqq\Tr\rbra*{\rbra*{\ketbra{1}{1}_{\sfA}\otimes E}\rho}. \]

Next, we incorporate the inward turn-halving idea into this construction (\Cref{thm:inward-turn-halving}). To this end, we introduce a single-qubit flag register $\sfF$ and set $\sfY=(\sfF,\sfZ,\sfW)$ as the designated output register of $\calN_\qq$, which defines the fixed target state
\[ \sigma_\qq \coloneqq\frac{I_{\sfF}}2\otimes\frac{I_{\sfZ}}2\otimes \frac{I_{\sfW}}{\dim(\sfW)} =\frac{I_{\sfY}}{2^{w+2}}. \]

Similar to the inward turn-halving transformation, we consider both the forward and backward branches in $\calN_\qq$:
\[ \calN_\qq(\rho) \coloneqq
    \frac{1}{2}\ketbra{0}{0}_{\sfF}\otimes\frac{I_{\sfZ}}2\otimes
      \Tr_{\setminus\sfW}(\rho)
     +\frac{1}{2}\ketbra{1}{1}_{\sfF}\otimes\calR(\rho)_{\sfZ}\otimes
      \frac{I_{\sfW}}{\dim(\sfW)} \]
Here, the forward branch ($\sfF=0$) traces out $(\sfA,\sfM')$ in $\rho$, while the backward branch ($\sfF=1$) maps $\rho$ to the calibrated output qubit. 

The calibrated decision circuit for $\calR$, together with EPR-pair preparations for the flag and maximally mixed pads, implements $\calN_\qq$ exactly with a polynomial number of gates. We implement these operations coherently, collecting all zero-initialized non-input qubits in $\sfR_\In$ and all non-output qubits in $\sfR_\Out$.
Denote the resulting unitary quantum circuit by $Q_\qq\colon(\sfX,\sfR_\In)\mapsto(\sfY,\sfR_\Out)$. After initializing $\sfR_\In$ to $\ket{\bar{0}}$ and tracing out
$\sfR_\Out$, this circuit induces $\calN_\qq$. 

\paragraph{Analysis.}
Applying \Cref{protocol:qq-close-image} to $Q_\qq$ yields the \qqQAM{} proof system $\protocol{P'}{V'}$. By \Cref{lemma:FixedCI-in-qqQAM}, the maximum acceptance probability of $\protocol{P'}{V'}$ is exactly $\max_{\rho}\F^2\rbra[\big]{\calN_\qq(\rho),\sigma_\qq}$, where $\rho$ ranges over all quantum states on $\sfX$. 
Using the orthogonality of the flag sectors and multiplicativity of fidelity, a direct calculation shows that
\begin{equation}
    \label{eq:qq-fidelity-split}
    \F\rbra[\big]{\calN_\qq(\rho),\sigma_\qq}
    =\frac12\F\rbra*{\Tr_{\setminus\sfW}(\rho),\frac{I_{\sfW}}{\dim(\sfW)}}
     +\frac12\F\rbra*{\calR(\rho),\frac{I_{\sfZ}}2}.
\end{equation}

It remains to show that $Q_\qq$ is indeed an instance of $\FixedCI[1,s_\star]$: 
\begin{itemize}
    \item For \emph{yes} instances, let $\tau$ be the
    state on $(\sfM',\sfW)$ produced by a prover whose acceptance probability
    in the original proof system is $p\coloneqq\Tr(E\tau)\geq2/3$.
    Since $\sfW$ contains the retained halves of the original EPR pairs,
    $\Tr_{\setminus\sfW}(\tau)=I_{\sfW}/\dim(\sfW)$.
    As in the single-qubit exact-half calibration, choose a state
    $\rho_{\sfA}$ with $\bra{1}\rho_{\sfA}\ket{1}=1/(2p)$ and set
    $\rho=\rho_{\sfA}\otimes\tau$.
    In the forward branch, the state in $\sfW$ is therefore maximally mixed;
    in the backward branch, $\gamma_\rho=p/(2p)=1/2$, so the calibrated output
    $\calR(\rho)$ is $I_{\sfZ}/2$.
    Both branches consequently agree with the corresponding flag sectors
    of $\sigma_\qq$, giving $\calN_\qq(\rho)=\sigma_\qq$ and hence
    \[
        \max_{\rho}\F^2\rbra[\big]{\calN_\qq(\rho),\sigma_\qq}=1.
    \]

    \item For \emph{no} instances, we first consider an arbitrary fixed input state $\rho$ and a state $\sigma$, both on $\sfX$, where $\sigma$ ranges over all states satisfying $\Tr_{\setminus\sfW}(\sigma) = I_{\sfW}/\dim(\sfW)$. 
    By unitary equivalence of purifications (\Cref{lemma:uhlmann}), $P$ can produce $\Tr_{\sfA}(\sigma)$ on $(\sfM',\sfW)$ by acting on its received
    EPR halves and private register. The soundness bound of $\protocol{P}{V}$ therefore gives
    \[ 0 \leq \gamma_\sigma = \Tr\rbra*{\rbra*{\ketbra{1}{1}_{\sfA}\otimes E}\sigma}
         \leq \Tr\rbra*{\rbra*{I_{\sfA}\otimes E}\sigma}
         =\Tr\rbra*{E\Tr_{\sfA}(\sigma)}
         \leq 1/3. \]
    Using the explicit expression for $\calR(\sigma)$, we obtain
    \[
        \max_\sigma\F\rbra*{\calR(\sigma),\frac{I_{\sfZ}}2}
        =\max_\sigma\frac{\sqrt{1-\gamma_\sigma}+\sqrt{\gamma_\sigma}}{\sqrt2}
        \leq \frac{1+\sqrt{2}}{\sqrt{6}}.
    \]
    Here, the last inequality follows because $\sqrt{1-\gamma}+\sqrt{\gamma}$ is increasing on $[0,1/3]$.
    
    Consequently, for the arbitrary input state $\rho$, the identity in \Cref{eq:qq-fidelity-split} yields
    \begin{subequations}
    \begin{align*}
        \F^2\rbra*{\calN_\qq(\rho),\sigma_\qq}
        &=\frac{1}{4}\rbra*{\F\rbra*{\Tr_{\setminus\sfW}(\rho),\frac{I_{\sfW}}{\dim(\sfW)}} 
            +\F\rbra*{\calR(\rho),\frac{I_{\sfZ}}2}}^{\!2}\\
        &\leq\frac{1}{2}\cdot\F^2\rbra*{\Tr_{\setminus\sfW}(\rho),\frac{I_{\sfW}}{\dim(\sfW)}}
            +\frac{1}{2}\cdot \F^2\rbra*{\calR(\rho),\frac{I_{\sfZ}}2}\\
        &=\frac{1}{2} \max_\sigma\rbra*{\F^2(\rho,\sigma)
                      +\F^2\rbra*{\calR(\rho),\frac{I_{\sfZ}}2}}\\
        &\leq \frac{1}{2} \max_\sigma\rbra*{\F^2\rbra*{\calR(\rho),\calR(\sigma)}
                      +\F^2\rbra*{\calR(\rho),\frac{I_{\sfZ}}2}}\\
        &\leq \frac{1}{2} + \frac{1}{2} \cdot \max_\sigma\F\rbra*{\calR(\sigma),\frac{I_{\sfZ}}2}\\
        &\leq \frac{1}{2} \rbra[\bigg]{1 + \frac{1+\sqrt{2}}{\sqrt{6}}}
        =s_\star.
    \end{align*}
    \end{subequations}
    Here, the second line uses the AM--GM inequality, the third line uses the fidelity extension identity (\Cref{corr:fidelity-extension-identity}), the fourth line applies the data-processing inequality (\Cref{lemma:fidelity-data-processing}) to $\calR$, and the fifth line follows from the inequality in \Cref{lemma:sum-of-squared-fidelity}.
\end{itemize}

Therefore, $Q_\qq$ satisfies the \emph{yes} and \emph{no} conditions of
$\FixedCI[1,s_\star]$ when $x\in\calI_\yes$ and $x\in\calI_\no$,
respectively. By \Cref{lemma:FixedCI-in-qqQAM}, the resulting proof
system $\protocol{P'}{V'}$ has perfect completeness and soundness at most
$s_\star$, which establishes $\qqQAM\subseteq\qqQAMone$ as desired.
\end{proof}


\section{Achieving perfect completeness for \QMA{}}
\label{sec:qma}

In this section, we will prove \Cref{thm:QMA-informal} by first constructing a matrix $K$ whose kernel certifies \emph{yes} instances, then implementing the new $\QMA_1$ verification circuit, and finally establishing perfect completeness and the desired soundness bound.

\subsection{Constructing a matrix whose kernel certifies \emph{yes} instances}
\label{subsec:kernel-construction}

Consider $\calI\in\QMA^\calG$ and an input $x\in\calI$, with $n \coloneqq |x|$. We start with a $\QMA^\calG$ verification circuit $V_x$, together with the induced acceptance operator $M$ of dimension $2^w\times 2^w$ as in \Cref{subsec:proof-system-conventions}, where $w$ denotes the number of qubits used by the witness. 

Set $d \coloneqq w+3$. To establish \Cref{thm:QMA-informal}, we construct a matrix $K$ satisfying the following properties under the promise of vanishing errors:\footnote{The assumed threshold parameters can be achieved by (witness-preserving) error reduction for \QMA{}~\cite{MW05}, which can be implemented exactly over $\calG$: this construction uses the original verification circuit and its inverse, coherent records of computational-basis measurements, and reversible Boolean operations.} 
\begin{itemize}
    \item For \emph{yes} instances, $\norm{M} \geq 1-\frac{1}{2d}$ implies that some valid labels $(z,m)$ give $\ker\rbra{K} \neq \cbra{0}$.
    \item For \emph{no} instances, $M \preceq \frac{I}{2d}$ implies that every valid pair of labels $(z,m)$ gives $\sigma_{\min}(K) \geq \frac{1}{16}$.
\end{itemize}

We now describe the matrix $K$ from the acceptance operator $M$ over the original message register $\sfM$. 
For positive coefficients $\alpha$ and $\beta$ to be specified and $z\in\binset^w$, define
\[  K\coloneqq\alpha(I-M)^d\rbra*{I-\beta(I+M)^d\ketbra{z}{z}}. \]

As explained in \Cref{subsec:qma-proof-overview}, the factor $(I-M)^d$ gives OR-type repetition: its squared norm on a $\lambda$-eigenvector of $M$ is $(1-\lambda)^{2d}$, the corresponding rejection probability. The second factor is a rank-one perturbation of $I$ that, when $\beta\braket{z}{(I+M)^d}{z}=1$, annihilates exactly the vectors proportional to $(I+M)^d\ket{z}$.
Notably, the matrix $K$ need not be Hermitian.

\paragraph{Exact coefficient precision.}
Let $h(n)$ be the number of Hadamard gates in $V_x$. By \Cref{subsec:proof-system-conventions}, $2^hM$ has Gaussian integer entries. The matrix $2^{dh}(I+M)^d$ therefore has Gaussian integer entries and, being Hermitian, has real diagonal entries. Consequently, for every $z\in\binset^w$,
\begin{equation}
\label{eq:qma-certificate-integrality}
    2^{hd}\braket{z}{(I+M)^d}{z}=\braket{z}{2^{hd}(I+M)^d}{z}\in\bbZ.
\end{equation}
Moreover, $0\preceq M\preceq I$ gives $I\preceq(I+M)^d\preceq2^dI$. Thus the integer in \Cref{eq:qma-certificate-integrality} is positive and at most $2^{(h+1)d}$. Both $h$ and $d$ are polynomially bounded in $n$.

\subsection{Implementing the new verification circuit}
\label{subsec:proof-system}

We now present the new verifier $V'_x$, which receives $(z,m,\ket{\psi})$ in one message:
\begin{itemize}
    \item $(z,m)$: a $w$-bit label $z$ and an integer $m$ represented with $\ell\coloneqq(h+1)d+1$ bits;
    \item $\ket{\psi}$: a quantum witness on the original $w$-qubit message register $\sfM$.
\end{itemize}

The verifier first measures the label registers and rejects unless
\[ 2^{hd+2}\leq m<2^\ell. \]
For a valid label, let $k$ be the bit length of $m$ excluding leading zeros, so that $2^{k-1}\leq m<2^k$ and $hd+3\leq k\leq\ell$. 
The verifier computes $\alpha \coloneqq m/2^{k+1}$ and $\alpha\beta = 2^{hd-k-1}$, both of which are \emph{dyadic}, where $\beta \coloneqq 2^{hd}/m$. Every valid label satisfies
\begin{equation}
\label{eq:coefficient-bounds}
 \frac14\leq\alpha<\frac12,\qquad
 0<\beta\leq\frac14,\qquad
 \alpha+\alpha\beta=\frac{m+2^{hd}}{2^{k+1}}\leq2\alpha<1.
\end{equation}
The verifier does not check whether $m$ equals the integer in \Cref{eq:qma-certificate-integrality}.

\paragraph{Constructing $U_K$.}
Since $I-M$ and $I+M$ commute, we have
\begin{equation}
    \label{eq:qma-matrix-expanded}
    K=\alpha(I-M)^d-\alpha\beta(I-M^2)^d\ketbra{z}{z}.
\end{equation}
We first construct exact block encodings of $I-M$ and $I-M^2$ as follows:
\begin{itemize}
    \item\textbf{$U_{I-M}$:} Apply \Cref{lemma:POVM-encoding} to the output-$0$ POVM operator of $V_x$.
    \item\textbf{$U_{I-M^2}$:} Let $U_M$ be the exact block encoding of $M$ obtained by applying \Cref{lemma:POVM-encoding} to the output-$1$ POVM operator of $V_x$. Then apply \Cref{lemma:POVM-encoding} to the POVM operator $I-M^2$, corresponding to the event that measuring the ancillary qubits of $U_M$ yields at least one $1$.
\end{itemize}
The circuit implementations of $U_{I-M}$ and $U_{I-M^2}$ use $O(1)$ calls to $V_x$ and $V_x^\dagger$ and have exact controlled implementations by \Cref{eq:controlled-conjugation}.

Next, we construct exact block encodings of $(I-M)^d$ and $(I-M^2)^d\ketbra{z}{z}$ by taking products:
\begin{itemize}
    \item\textbf{$U_{(I-M)^d}$:} Apply \Cref{lemma:block-product} to $d$ copies of $U_{I-M}$ with disjoint ancillary registers. 
    \item\textbf{$U_{(I-M^2)^d\ketbra{z}{z}}$:} Apply \Cref{lemma:block-product} to $d$ copies of $U_{I-M^2}$ and $U_{\ketbra{z}{z}}$ with disjoint ancillary registers, where $U_{\ketbra{z}{z}}$ is an exact block encoding obtained by flipping a fresh flag qubit initialized to $\ket{0}$ on every computational-basis state $\ket{y}_{\sfM}$ with $y\neq z$.
\end{itemize}

With $U_{(I-M)^d}$ and $U_{(I-M^2)^d\ketbra{z}{z}}$ in hand, we are ready to construct an exact block encoding $U_K$ of $K$, with a circuit implementation over $\calG$. Choose integers $N\coloneqq2^{\ell+1}$, $\iota_\alpha\coloneqq m2^{\ell-k}=N\alpha$, and $\iota_\beta\coloneqq2^{hd+\ell-k}=N\alpha\beta$. 
Apply \Cref{lemma:uniform-selector} to the three branch operators
\[
 (I-M)^d,\qquad -(I-M^2)^d\ketbra{z}{z},\qquad 0,
\]
with respective selector interval lengths $\iota_\alpha$, $\iota_\beta$, and $N-\iota_\alpha-\iota_\beta$, which sum to $N$ by \Cref{eq:coefficient-bounds}.
An $(\ell+1)$-qubit register indexes the $N$ labels in these intervals. Since $\bra{0}X\ket{0}=0$, an $X$ gate serves as an exact block encoding of the zero operator. 

Consequently, the circuit implementation of $U_K$ uses $O(d)$ calls to each of $V_x$ and $V_x^\dagger$ and polynomially many one- or three-qubit gates from $\calG$ and ancillary qubits.

\paragraph{Verification circuit $V'_x$.}
The proof system described in \Cref{protocol:qma-kernel-test} uses the exact kernel test $\calT_K$ (\Cref{lem:signal-rejection}), where the register $\sfA$ contains all ancillary qubits of $U_K$.

\begin{algorithm}[ht!]
    \SetEndCharOfAlgoLine{.}
    \SetAlgoVlined
    \setlength{\parskip}{5pt}

    \textbf{Witness:} The classical labels $(z,m)$ and a quantum state $\ket{\psi}$ on $\sfM$\;

    \textbf{1.} Measure both label registers in the computational basis and reject unless the bounds $2^{hd+2}\leq m<2^\ell$ hold\;

    \textbf{2.} Compute the dyadic coefficients $\alpha=m/2^{k+1}$ and $\alpha\beta=2^{hd-k-1}$ and construct $U_K$ using the dyadic linear combination (\Cref{lemma:uniform-selector}) with $U_{(I-M)^d}$ and $U_{(I-M^2)^d\ketbra{z}{z}}$\;

    \textbf{3.} Run $\calT_K$: initialize $\sfA$ to $\ket{\bar{0}}_\sfA$ and apply $U_K$ to $\ket{\bar{0}}_\sfA\otimes\ket{\psi}_\sfM$. Reject if measuring $\sfA$ in the computational basis yields the all-zero outcome, and accept otherwise\;

    \BlankLine
    \caption{The one-message verification circuit $V'_x$.}
    \label[algorithm]{protocol:qma-kernel-test}
\end{algorithm}

\subsection{Analysis of \texorpdfstring{\Cref{protocol:qma-kernel-test}}{Protocol 3}}
\label{subsec:QMA-analysis}

We first compute the acceptance probability of $V'_x$, and then specify the honest prover's labels and quantum witness. Since $U_K$ is an exact block encoding of $K$, applying \Cref{lem:signal-rejection} to $U_K$ yields, for fixed valid labels and any quantum witness $\ket{\psi}$,
\begin{equation}
\label{eq:acceptance}
 \Pr{V'_x\text{ accepts }\ket{\psi}}=1-\norm*{K\ket{\psi}}^2.
\end{equation}

\parheading{Perfect completeness of \Cref{protocol:qma-kernel-test}.}
For \emph{yes} instances, the honest prover sends $z\in\binset^w$ that maximizes $\braket{z}{(I+M)^d}{z}$, together with $m$ and $\ket{\psi_z}$ satisfying
\[
    m=2^{hd}\braket{z}{(I+M)^d}{z} \quad\text{and}\quad
    \ket{\psi_z}=\frac{(I+M)^d\ket{z}}{\norm*{(I+M)^d\ket{z}}}.
\]
We now verify that the classical labels $(z,m)$ are valid by checking the bound $2^{hd+2}\leq m<2^\ell$.  
By \Cref{eq:qma-certificate-integrality}, $m$ is an integer and satisfies $m\leq2^{(h+1)d}<2^\ell$, proving the upper bound. To see the lower bound $m/2^{hd} \geq 4$, we have
\begin{subequations}
\label{eq:qma-honest-label-bound}
\begin{align}
    \frac{m}{2^{hd}}
    =\braket{z}{(I+M)^d}{z}
    &\geq 2^{-w}\Tr\rbra*{(I+M)^d}\\
    &\geq 2^{-w}(1+\norm{M})^d\\
    &\geq 2^{d-w}\rbra*{1-\frac{1}{4d}}^d\\
    &\geq 8\rbra*{1-\frac14} = 6.
\end{align}
\end{subequations}
Here, the first inequality uses the choice of \(z\) for which the corresponding diagonal entry is the largest and hence at least the average, the second line holds because $\lambda_{\max}(I+M) = 1+\norm{M}$, the third line follows from the completeness condition, and the last line uses the fact that $d=w+3$ and the inequality $(1-t)^d\geq 1-dt$ for $t\in[0,1]$ and integers $d\geq 1$.

We now verify that the state $\ket{\psi}=\ket{\psi_z}$ in the honest witness lies in $\ker(K)$. Applying $K$ to the unnormalized vector $(I+M)^d\ket{z}$ corresponding to $\ket{\psi_z}$, we obtain
\begin{align*}
    K(I+M)^d\ket{z}
    &=\alpha\rbra*{1-\beta\braket{z}{(I+M)^d}{z}} (I-M)^d(I+M)^d\ket{z}\\
    &=\alpha\rbra*{1- \frac{2^{hd}}{m}\cdot\frac{m}{2^{hd}} } (I-M)^d(I+M)^d\ket{z}\\
    &=0.
\end{align*}
Substituting $K\ket{\psi_z}=0$ into \Cref{eq:acceptance} gives $\Pr{V'_x\text{ accepts }\ket{\psi_z}}=1$, as desired.

\parheading{Soundness of \Cref{protocol:qma-kernel-test}.}
For \emph{no} instances, since invalid labels are rejected, it suffices to fix any valid measured labels and, by linearity, consider a pure quantum state $\ket{\psi}$ on $\sfM$ conditioned on these labels.
From \Cref{eq:qma-matrix-expanded} and the soundness condition $M \preceq I/(2d)$, we obtain
\begin{subequations}
\label{eq:qma-soundness-bound}
\begin{align}
 \norm*{K\ket{\psi}}
 &\geq\alpha\norm*{(I-M)^d\ket{\psi}}
       -\alpha\beta\norm*{(I-M^2)^d\ketbra{z}{z}\ket{\psi}}\\
 &\geq\alpha\rbra*{1-\frac{1}{2d}}^d-\alpha\beta\\
 &\geq\alpha\rbra*{\frac12-\beta}\\
 &\geq\frac14\rbra*{\frac12-\frac14}=\frac1{16}.
\end{align}
\end{subequations}
Here, the first line follows from the reverse triangle inequality, the second line uses the fact that $(I-M)^d\succeq(1-1/(2d))^dI$ and $0\preceq I-M^2\preceq I$, the third line follows from $(1-t)^d\geq1-dt$ for $t\in[0,1]$ and integers $d\geq1$, and the last line follows directly from \Cref{eq:coefficient-bounds}.

Since \Cref{eq:qma-soundness-bound} holds for every quantum state $\ket{\psi}$, $\sigma_{\min}(K)\geq1/16$. Using \Cref{eq:acceptance}, we therefore obtain
\begin{equation}
    \label{eq:QMA-soundness-bound}
    \Pr{V'_x\text{ accepts }\ket{\psi}} \leq 1-\rbra*{\frac1{16}}^2=1-\frac1{256}.
\end{equation}
Averaging over the measured labels then proves the same bound as \Cref{eq:QMA-soundness-bound} for arbitrary prover messages, establishing $\QMA^\calG\subseteq\QMA^\calG_1$ as desired.


\section*{Acknowledgments}
\noindent
We thank Baocheng Sun for suggesting and exploring the possibility of removing the clock register in the resulting $\QMA_1$ verification circuit in the preliminary version of our work~\cite{LV26}, which was one of the starting points for the new proof in \Cref{sec:qma}.

The authors were supported in part by funding from the Swiss State Secretariat for Education, Research and Innovation (SERI). 
This work was also supported in part by a grant of access to OpenAI models through the ChatGPT for Academic Researchers program.

\section*{AI use disclosure}
\noindent
Large language model tools were used interactively at all stages of the preparation of this manuscript: from the earliest exploratory research stages to the final exposition and writing steps. The final manuscript was substantially revised by the authors, who remain solely responsible for all mathematical claims, proofs, references, and conclusions.

\bibliographystyle{alphaurlQ}
\bibliography{main}

\end{document}